\newtheorem{theorem}{Theorem}
\newtheorem{definition}{Definition}
\newcommand\bbR{\mathbb{R}}
\newcommand\bbN{\mathbb{N}}
\newcommand\bxi{\boldsymbol{\xi}}
\newcommand\bv{\boldsymbol{v}}
\newcommand\bx{\boldsymbol{x}}
\newcommand\bu{\boldsymbol{u}}
\newcommand\bdeta{\boldsymbol{\eta}}
\newcommand\bA{\boldsymbol{\mathrm{A}}}
\newcommand\bPhi{\boldsymbol{\phi}}
\newcommand\dd{\,\mathrm{d}}
\newcommand\mh{\mathcal{H}}
\newcommand\mH{\mathcal{H}^{[u, \theta]}}
\newcommand\mbH{\mathcal{H}^{[\bu, \theta]}}
\newcommand\mHT{\mathcal{H}^{[\bu, \Theta]}}
\newcommand\mN{\mathcal{N}}
\newcommand\bw{\boldsymbol{w}}
\newcommand\bdf{\boldsymbol{f}}
\newcommand\bmH{\boldsymbol{\mathcal{H}}^{[u,\theta]}}
\newcommand\bD{\boldsymbol{\mathrm{D}}}
\newcommand\bM{\boldsymbol{\mathrm{M}}}
\newcommand\bQ{\boldsymbol{\mathrm{Q}}}
\newcommand\bS{\boldsymbol{S}}
\newcommand\weight{\omega}
\newcommand\bbH{\mathbb H}
\newcommand\mP{\mathcal{P}}
\newcommand\bPb{\boldsymbol{\mathrm{P}}_b}
\newcommand\bPp{\boldsymbol{\mathrm{P}}_p}
\newcommand\tbPb{\boldsymbol{\mathrm{\tilde{P}}}_b}
\newcommand\tbPp{\boldsymbol{\mathrm{\tilde{P}}}_p}
\newcommand\mL{\mathcal{L}}
\newcommand\identity{\boldsymbol{\mathrm{I}}}
\newcommand\paperauthor[1]{\textsc{#1}}
\newcommand\paperauthors[1]{\textsc{#1} et al.}
\newcommand\Grad{{Grad's }}
\newcommand\rmspan{\mathrm{span}}
\newcommand\pd[2]{\dfrac{\partial {#1}}{\partial {#2}}}
\newcommand\od[2]{\dfrac{\dd {#1}}{\dd {#2}}}
\newcommand\odd[2]{\dfrac{\mathrm{D} {#1}}{\mathrm{D} {#2}}}
\numberwithin{equation}{section}
\theoremstyle{remark} \newtheorem{remark}{Remark}}
\title{Model Reduction of Kinetic Equations by Operator Projection}
\author{Yuwei Fan\thanks{School of Mathematical Sciences, Peking University,
    Beijing, China, email: {\tt ywfan@pku.edu.cn}.},~~
    Julian Koellermeier\thanks{Center for Computational Engineering Science,
    RWTH Aachen University, Aachen, Germany,
    email: {\tt koellermeier@mathcces.rwth-aachen.de}},~~
    Jun Li\thanks{School of Mathematical Sciences, Peking University,
    Beijing, China, email: {\tt lijun609@pku.edu.cn}.},~~
    Ruo Li\thanks{CAPT, LMAM \& School of Mathematical Sciences, Peking
    University, Beijing, China, email: {\tt rli@math.pku.edu.cn}.},~~
    Manuel Torrilhon\thanks{Center for Computational Engineering Science,
    RWTH Aachen University, Aachen, Germany,
    email: {\tt mt@mathcces.rwth-aachen.de}}~~
}
\begin{document}
\maketitle

\begin{abstract}
  By a further study of the mechanism of the hyperbolic regularization
  of the moment system for the Boltzmann equation proposed in \cite{Fan},
  we point out that the key point is treating the time and space
  derivative in the same way. Based on this understanding, a uniform
  framework to derive globally hyperbolic moment systems from kinetic
  equations using an operator projection method is proposed. The
  framework is so concise and clear that it can be treated as an
  algorithm with four inputs to derive hyperbolic moment systems by
  routine calculations. Almost all existing globally hyperbolic moment
  systems can be included in the framework, as well as some new moment
  system including globally hyperbolic regularized versions of Grad's
  ordered moment systems and a multi-dimensional extension of the
  quadrature-based moment system.

\vspace*{4mm}
\noindent {\bf Keywords:}
Kinetic equation; Boltzmann equation; moment method; projection;
hyperbolicity; regularization
\end{abstract}

\section{Introduction}
Kinetic equations, such as the Boltzmann equation and the radiative
transfer equation, are widely used in many different fields of
applications, including rarefied gases, microflow, semi-conductor
device simulation, radiative transfer, and so on. During the past
decades, various solution methods have been developed to investigate
kinetic equations. Among these methods, the moment method is quite
attractive due to its numerous advantages \cite{Muller,
  Struchtrup2002, TorrilhonEditorial}, and it is regarded as a
successful tool to extend classical fluid dynamics, and achieve highly
accurate approximations with great efficiency.

The moment method for gas kinetic theory was first proposed by
\paperauthor{Grad} \cite{Grad} in 1949, and the most notable \Grad 13
moment system is also proposed therein. In the same paper, the moment
system has been carefully studied, including the characteristics. 
Although the loss of hyperbolicity of the moment system was not
pointed out by Grad himself, it is not hard to observe it from his
paper.  Later, in \cite{Muller} it was pointed out that the 1D
reduction of \Grad 13 moment
system is only hyperbolic around the Maxwellian, and in
\cite{Grad13toR13} it was further revealed that for the 3D case, the
moment system is not hyperbolic even in any neighbourhood of the
Maxwellian. Since the hyperbolicity cannot be guaranteed for \Grad
moment method, the moment system as a quasi-linear partial
differential system with Cauchy data is no longer well-posed even
locally. Hence, the application of the moment method was seriously
limited for a long time. However, some research in recent years
brought new hope for this problem. \paperauthor{Levermore}
proposed the maximum entropy method \cite{Levermore} in 1995, and his
method yields globally hyperbolic equations but can unfortunately not
be derived in analytical form for most cases. Based on the maximum
entropy principle, \paperauthors{McDonald} \cite{McDonald} proposed an
approximative affordable robust version of Levermore's 14 moment
system, which is almost globally hyperbolic. A different hyperbolic approach that is tailored to special cases uses a multi-variate Pearson-IV-Distribution and was proposed by \paperauthor{Torrilhon} in \cite{Torrilhon2010}. Moreover, a viscous
regularization has been used to regularize \Grad moment method,
e.g. \cite{Struchtrup2003, Gu, NRxx_new}, based on the \emph{order-of-magnitude} approach also used in \cite{KaufTorrilhon}.

Concerning the global hyperbolicity of Grad-type moment systems, some
new methods are in process. The method for the 1D Boltzmann equation
introduced by \paperauthors{Cai} in \cite{Fan} is based on investigating the
properties of the coefficient matrix of the moment system. The method
therein essentially cuts off higher order terms during the derivation
such that it is globally hyperbolic. Then the method is extended to
the multi-dimensional case in two different ways \cite{Fan_new,
  ANRxx}. Shortly thereafter, \paperauthor{Koellermeier} proposed a
quadrature-based regularization method
\cite{KoellermeierMSc2013}. This method deduces the moment model by
computing the integrals using a suitable quadrature rule instead of
exact integration. This results in a globally hyperbolic moment system
very similar to the one given in \cite{Fan}. The method had since been further
extended to the multi-dimensional case in \cite{KoellermeierRGD2014},
though the resulting system is not rotational invariant. Both methods
in \cite{Fan} and \cite{KoellermeierMSc2013} have been extended to
more general cases in \cite{framework} and \cite{Koellermeier},
respectively, which has led to a better understanding of the
hyperbolicity of moment systems and the corresponding regularizations.

Based on the understanding of these new methods, in this paper we
focus on a general framework to cover all of the different
methods. We begin with the investigation of the globally hyperbolic
moment equations (HME) proposed in \cite{Fan} and point out that the
key point of the regularization is treating the time and space
derivative in the same way. Based on this understanding, by
considering different kinds of kinetic equations, a general framework
to deduce globally hyperbolic moment systems is proposed using an
operator projection method. In this framework, the cut-off procedure
in \cite{framework} is extended to a general operator
projection and the kinetic equation under consideration can have a very
generic form, including for example the Boltzmann equation, the
transformed Boltzmann equation and the radiative transfer
equation. A so-called internal projection strategy is introduced to make the
method applicable to kinetic equations without standard form. The
ansatz is chosen as a weight function multiplied by a
polynomial. Based on the framework, the resulting moment system is
always rotational invariant and is usually globally hyperbolic. We
point out that the conditions to hyperbolicity are almost always
fulfilled.

The new framework can be regarded as an algorithm to derive moment
systems from kinetic equations, once the four inputs, i.e. the form of
the kinetic equation, the weight function, the projection and the internal
projection strategy, are given. The weight function in the ansatz
space determines most of the properties of the resulting system. The
choice of a suitable polynomial basis, a projection operator and an internal
projection strategy provide us with a lot of freedom to achieve different
moment systems. This makes it possible to derive a moment system with
routine calculations and allows for easy comparison of different
models. We point out that the new framework can give us most of the
traditional moment systems, such as hyperbolic moment equations (HME)
proposed in \cite{Fan, Fan_new, ANRxx}, the quadrature-based moment
equations (QBME) \cite{KoellermeierMSc2013} and Levermore's maximum
entropy method \cite{Levermore} for the Boltzmann equation as well as the $P_N$ and
$M_N$ method in radiative transfer. Moreover, one can derive totally
new moment systems based on the framework. We provide some examples
including a hyperbolic regularization of the ordered moment hierarchy
(such as 13, 26, 45 moment systems) and extend the QBME to the
multi-dimensional case with the resulting moment system being
rotational invariant.

The remaining part of this paper is organized as follows. Some
necessary notation about projection operators is given in Section
\ref{sec:preliminary} and then we analyze the hyperbolic
regularization by \paperauthors{Cai} for Grad's moment method in
Section \ref{sec:mmbe}. In Section \ref{sec:framework}, we give the
new framework with a detailed discussion. Several examples of existing
moment systems derived using our new framework are given in Section
\ref{sec:phmsf}. Finally, we derive some new hyperbolic
regularizations with the operator projection approach in Section
\ref{sec:nhmsf}. The paper ends with a conclusion.

\section{Preliminaries}
\label{sec:preliminary}
Let $\bbR^D$ be the $D$-dimensional real space. We introduce a
function $\weight$ on $\bbR^D$, which is referred to as {\it weight
  function} hereafter, satisfying
\[
    0<\omega(\bx)<\infty,\quad
    0\leq \int_{\bbR^D}\bx^{\alpha}\weight(\bx)\dd\bx <
    \infty,\quad
    \forall \alpha\in\bbN^D,
\]
where $\bx^\alpha=\prod_{d=1}^Dx_d^{\alpha_d}$. Associated with the
weight function $\weight$, we define a weighted polynomial space
$\bbH^{\weight} = \rmspan\left\langle \{\bx^{\alpha}
  \weight(\bx)\}_{\alpha \in \bbN^D} \right\rangle $, which is an
infinite-dimensional linear space equipped with the norm
\[
    (f,g)_{\weight}:=\int_{\bbR^D}\frac{1}{\weight(\bx)}
    f(\bx) g(\bx)\dd\bx,\quad f,g\in\bbH^{\weight}.
\]

For a positive integer $n\in\bbN$, let $\bbH^{\weight}_n$ be a closed
subspace of $\bbH^{\weight}$ and $\mathrm{dim}(\bbH^{\weight}_n)=n+1$.
We call the finite-dimensional space $\bbH_n^{\weight}$ an {\it
  admissible}\footnote{See discussion on the admissible subspace for
  any moment method in \cite{Levermore}.} subspace if
\begin{itemize}
    \item $\rmspan\left\langle \weight(\bx)\{1, \bx, |\bx|^2\}\right\rangle
        \subset\bbH_n^{\weight}$,
    \item if $g(\bx)\in\bbH_n^{\weight}$, then
        $g(\boldsymbol{\mathrm{Q}}\bx+\boldsymbol{b})\in\bbH_n^{\weight}$,
        where $\boldsymbol{\mathrm{Q}}$ is a rotation matrix and
        $\boldsymbol{b}$ is a translation vector.
\end{itemize}

Let $\{\phi_0, \phi_1, \dots, \phi_k, \dots\}$ be a basis of
$\bbH^{\weight}$ and $\{\varphi_0, \varphi_1, \dots, \varphi_n\}$ be
a basis of $\bbH^{\weight}_n$, respectively. Since $\bbH^{\weight}_n$
is a subspace of $\bbH^{\weight}$, there exists a matrix
$\bPb\in\bbR^{(n+1) \times \infty}$ with full row rank such that
$\boldsymbol{\varphi} = \bPb\boldsymbol{\phi}$, where
$\boldsymbol{\phi} = (\phi_0, \phi_1, \dots, \phi_k, \dots)^T$ and
$\boldsymbol{\varphi} = (\varphi_0, \varphi_1, \dots, \varphi_n)^T$.

A linear bounded operator $\mP: \bbH^{\weight}\rightarrow
\bbH^{\weight}$ is called a {\it projection operator} on
$\bbH^{\weight}_n$ if
\begin{itemize}
    \item $\mP g\in\bbH^{\weight}_n$ for all $g\in\bbH^{\weight}$,
    \item $\mP g=g$ for all $g\in\bbH^{\weight}_n$.
\end{itemize}

For any $g\in\bbH^{\weight}$, there exists $g_i,
i=0,\dots,\infty$ such that $g=\sum_{i=0}^{\infty}g_i\phi_i$.
Since $\mP g\in\bbH^{\weight}_n$, there exists
$\hat{g}_i,i=0,\dots,n$ such that $\mP
g=\sum_{i=0}^n\hat{g}_i\varphi_i$.
Since $\mP$ is a linear operator, there exists a unique matrix
$\bPp\in\bbR^{n+1\times\infty}$ satisfying
$\boldsymbol{\hat{g}}=\bPp\boldsymbol{g}$, where
$\boldsymbol{\hat{g}}=(g_0,\dots,g_n)^T$ and
$\boldsymbol{g}=(g_0,\dots,g_k,\dots)^T$, such that
\begin{equation}
    \mP g=\left\langle \bPb\boldsymbol{\phi}, \bPp\boldsymbol{g}\right\rangle_{N} ,
\end{equation}
where $\left\langle \cdot,\cdot\right\rangle_{N}$ denotes the inner
product of finite size vectors as opposed to $\left\langle
  \cdot,\cdot\right\rangle_{\infty}$ for infinite size vectors that
will be used later.  Noticing that $\mP$ is a linear bounded
projection operator, we have
\begin{equation}\label{eq:projection}
    \begin{aligned}
    ||\bPp||< \infty,\quad  \text{where $||\cdot||$ is a matrix norm,}\\
    \bPb\bPp^T=\identity_n, \quad \identity_n \text{ is an } n\times n
    \text{ identity matrix. }
    \end{aligned}
\end{equation}
Clearly the projection operator $\mP$ is uniquely determined by
$\bPp$, thus hereafter we may directly use the matrix $\bPp$ to denote
the projection on the weighted polynomial space.

Particularly, for the classical orthogonal projection, i.e.
\[
    ||\mP g-g||_{\weight}\leq||f-g||_{\weight}, \quad \forall
    f\in\bbH^{\weight},
\]
we have
\begin{equation}\label{eq:bpp}
  \bPp=\left( (\varphi_i, \varphi_j)_{\weight} \right)_{(n+1) \times
    (n+1)}^{-1} \cdot \bPb\cdot \left(
    (\phi_i, \phi_j)_{\weight} \right)_{\infty\times\infty}.
\end{equation}
Furthermore, if $\varphi_i = \phi_i$, $i = 0, \dots, n$ and
$(\varphi_i, \phi_j)_{\weight} = 0$ for all $i = 0, \dots, n$, $j =
n+1, n+2, \dots$, then the orthogonal projection is actually a cut-off
and we have
\[
\bPb=\bPp=\boldsymbol{\mathrm{T}}:=\begin{pmatrix}
  \identity _{n+1}  & \boldsymbol{0}
\end{pmatrix},
\]
where $\identity _{n+1}$ is the $(n+1)$-th order identity matrix.

For later use, we note
\begin{definition}[Hyperbolicity]
\label{def:hyperbolicity}
A system of first order quasi-linear partial differential equations
\[
\pd{\bw}{t}+\sum_{d=1}^D\bA_d(\bw)\pd{\bw}{x_d}=0
\]
is called {\it hyperbolic} in some region $\Omega$ if and only if any
linear combination of $\bA_d(\bw)$ is diagonalizable with real
eigenvalues for all $\bw\in\Omega$.
\end{definition}

\section{Moment Method for Boltzmann Equation}
\label{sec:mmbe}

In this section, we introduce the Boltzmann equation and then briefly
review \Grad moment system of arbitrary order proposed in
\cite{NRxx} together with the globally hyperbolic regularization for
the moment system in \cite{Fan, Fan_new}. At last, we give an
alternative understanding to derive the regularized moment system.

\subsection{The Boltzmann equation}
In gas kinetic theory, the motion of particles is depicted by the
mass density distribution function $f(t, \bx, \bxi)$ governed by the
Boltzmann equation
\begin{equation}\label{eq:boltzmann}
    \pd{f}{t}+\sum_{d=1}^D\xi_d\pd{f}{x_d}=S(f),
\end{equation}
where $t$ is the time variable and $\bx\in\bbR^D$ and $\bxi\in\bbR^D$
denote the position and microscopic velocity, respectively. The right
hand side of \eqref{eq:boltzmann} $S(f)$ is used to model the
interaction among particles and is beyond our concern, thus we do not
give its concrete form and simply assume $S(f_M(t,\bx,\bxi)) =
0$. Here $f_M(t,\bx,\bxi)$ is the local Maxwellian
\begin{equation*}
    f_M(t,\bx,\bxi)=\frac{\rho(t,\bx)}{\sqrt{2\pi\theta(t,\bx)}^D}\exp\left(
    -\frac{|\bxi-\bu(t,\bx)|^2}{2\theta(t,\bx)} \right).
\end{equation*}
The macroscopic density $\rho(t,\bx)$, velocity $\bu(t,\bx)$ and
temperature $\theta(t,\bx)$ are related to the distribution function
$f(t, \bx, \bxi)$ by
\begin{equation*}
    \begin{aligned}
        \rho(t,\bx) &= \int_{\bbR^D}f(t, \bx, \bxi)\dd\bxi, \\
        \rho(t,\bx)\bu(t,\bx) &= \int_{\bbR^D}\bxi f(t, \bx, \bxi)\dd\bxi, \\
        \frac{D}{2}\rho(t,\bx)\theta(t,\bx) +\frac{1}{2}\rho(t,\bx)|\bu(t,\bx)|^2&=
        \int_{\bbR^D}\frac{1}{2}|\bxi|^2f(t, \bx, \bxi)\dd\bxi.
    \end{aligned}
\end{equation*}
Multiplying the Boltzmann equation \eqref{eq:boltzmann} by $(1, \bxi,
|\bxi|^2/2)^T$ and integrating both sides over $\bbR^D$ with respect
to $\bxi$, we get the following conservation laws
\[
    \begin{aligned}
        \pd{\rho}{t} &+ \sum_{d=1}^D\pd{\rho u_d}{x_d}=0, \\
        \rho\pd{u_i}{t} &+ \sum_{d=1}^D\left( \rho
        u_d\pd{u_i}{x_d}+\pd{p_{id}}{x_d} \right) = 0,
        \quad i = 1, \dots, D, \\
        \frac{D\rho}{2}\pd{\theta}{t}&+\sum_{d=1}^D\left(
        \frac{D}{2}\rho u_d\pd{\theta}{x_d} + \pd{q_d}{x_d} \right)
        + \sum_{d=1}^D\sum_{k=1}^Dp_{kd}\pd{u_k}{x_d}=0.
    \end{aligned}
\]
Here $p_{ij}$ and $q_i$, $i,j = 1, \dots, D$ are pressure tensor and
heat flux, respectively, defined by
\[
    p_{ij}=\int_{\bbR^D}f(t, \bx, \bxi)(\xi_i-u_i)(\xi_j-u_j)\dd\bxi, \quad
    q_i=\int_{\bbR^D}f(t, \bx, \bxi)|\bxi-\bu|^2(\xi_i-u_i)\dd\bxi.
\]

\subsection{Moment method for the Boltzmann equation}
In 1949, \paperauthor{Grad} \cite{Grad} assumed that the distribution
function is close to a local Maxwellian and expanded the
distribution function $f$ into Hermite series to obtain the Grad 13
and Grad 20 moment systems. \paperauthor{Cai} and \paperauthor{Li}
\cite{NRxx} extended it to more general cases and obtained arbitrary
order moment systems. Here we first discuss the $D=1$ case and the
multi-dimensional case will be discussed in Section \ref{sec:hme}.

\subsubsection{\Grad moment method}
Let $D=1$. Following \paperauthor{Grad}, we expand the distribution
function around the Maxwellian as follows
\begin{equation}\label{eq:expansion}
    f(t,x,\xi)=\sum_{\alpha\in\bbN}f_{\alpha}(t,x)\mathcal{H}^{[u(t,x),\theta(t,x)]}_{\alpha}(\xi),
\end{equation}
where the basis function $\mH_{\alpha}(\xi)$ is a weighted Hermite
polynomial defined as
\begin{equation}\label{eq:grad-basisfunction}
    \mH_{\alpha}(\xi) = (-1)^\alpha\dfrac{\dd^\alpha}{\dd
        \xi^{\alpha}}\weight^{[u,\theta]}(\xi), \quad \alpha\geq0,\quad
    \weight^{[u,\theta]}(\xi)=\frac{1}{\sqrt{2\pi\theta}}\exp\left(
    -\frac{|\xi-u|^2}{2\theta}\right).
\end{equation}
Here we list some basic relations of the basis function
$\mH_{\alpha}(\xi)$ as following:
\begin{itemize}
    \item orthogonality relation:
        $
        \left(\mH_{\alpha}(\xi),\mH_{\beta}(\xi)\right)_{\weight^{[u,\theta]}}
        = \dfrac{\alpha!}{\theta^\alpha}\delta_{\alpha,\beta};
        $
    \item derivative relation:
        $
            \pd{\mH_{\alpha}({\xi})}{s}=\pd{u}{s}\mH_{\alpha+1}({\xi})
            +\dfrac{1}{2}\pd{\theta}{s}\mH_{\alpha+2}({\xi}), \;~ s=t,x;
        $
    \item recurrence relation:
        $
            \xi\mH_{\alpha}({\xi})=\theta\mH_{\alpha+1}(\xi)+
            u\mH_{\alpha}(\xi) + \alpha\mH_{\alpha-1}({\xi}).
        $
\end{itemize}

Using the orthogonality relation, we get the constraints
\begin{equation}\label{eq:constrain_grad}
    f_{1} = f_{2} = 0.
\end{equation}
Then substituting the expansion \eqref{eq:expansion} into the
Boltzmann equation \eqref{eq:boltzmann}, we get
\begin{align}
\label{eq:substitute_1d1}
    \pd{f}{t}&=\sum_{\alpha\in\bbN}\left( \pd{f_{\alpha}}{t}
    + f_{\alpha-1}\pd{u}{t}
    +\frac{1}{2}f_{\alpha-2}\pd{\theta}{t}
    \right)\mH_{\alpha}(\xi),\\
\label{eq:substitute_1d2}
    \xi\pd{f}{x}&=\sum_{\alpha\in\bbN}\left( \pd{f_{\alpha}}{x}
    + f_{\alpha-1}\pd{u}{x}
    +\frac{1}{2}f_{\alpha-2}\pd{\theta}{x}
    \right)\left( u\mH_{\alpha}+\theta\mH_{\alpha+1}
    +\alpha\mH_{\alpha-1}
 \right).
\end{align}
Matching the coefficients of the basis functions in
\eqref{eq:substitute_1d1} and \eqref{eq:substitute_1d2}, we obtain
\Grad moment system with infinite number of equations
\begin{equation}\label{eq:grad-equations}
    \begin{aligned}
        \pd{f_{\alpha}}{t}&+u\pd{f_{\alpha}}{x}+\theta
        \pd{f_{\alpha-1}}{x}+(\alpha+1)\pd{f_{\alpha+1}}{x}+\\
        f_{\alpha-1}\pd{u}{t}&+\left(uf_{\alpha-1}+\theta
        f_{\alpha-2}+(\alpha+1)f_{\alpha}\right)\pd{u}{x}+\\
        \dfrac{f_{\alpha-2}}{2}\pd{\theta}{t}&+
        \dfrac{1}{2}\left( u f_{\alpha-2}+\theta
        f_{\alpha-3}+(\alpha+1)f_{\alpha-1}
        \right)\pd{\theta}{x}=S_{\alpha}, \quad \alpha \geq 3.
    \end{aligned}
\end{equation}
Here $S_{\alpha}$ is obtained by expansion of the collision part
$S(f)$. Noticing \eqref{eq:constrain_grad}, we let $\bw = (f_0, u,
\theta, f_3, f_4, \dots)$, then \eqref{eq:grad-equations} can be
written as
\begin{equation}\label{eq:grad-system}
    \bD\pd{\bw}{t}+\bM\bD\pd{\bw}{x}=\bS,
\end{equation}
where the matrices $\bD$ and $\bM$ are determined from
\eqref{eq:substitute_1d1} and \eqref{eq:substitute_1d2} and
$\bS=(S_{\alpha})_{\alpha\in\bbN}$ is a vector with entries sorted by
ascending order of $\alpha$.

Choosing an integer $M\geq 2$, discarding all the governing equations
of $f_{\alpha}, |\alpha|\ge M$ and dropping all the terms
including the space derivative of $f_{\alpha}, |\alpha|\ge M$, in the
remaining equations, we obtain \Grad $M+1$ moment system in
\cite{NRxx} for the 1D case, which can be written with modified matrices and variables as
\begin{equation}\label{eq:grad-system-projected}
    \bD_M\pd{\bw_M}{t}+(\bM\bD)_M\pd{\bw_M}{x}=\bS_M.
\end{equation}
The matrices $\bD_M$ and $(\bM\bD)_M$ as well as $\bw_M$ can be derived in a different way using the following procedure.

\subsubsection{Decomposition of the deduction}
\label{sec:viewpoint_grad}
The procedure deriving \Grad moment system can be decomposed into
the following steps:
\begin{enumerate}
    \item Weight function and weighted polynomial space:
        Choose $\weight^{[u,\theta]}(\xi)$ as the weight function, and let
        the weighted polynomial space $\bbH^{\weight^{[u,\theta]}}
        =\rmspan\left\langle
        \{\mH_{\alpha}(\xi)\}_{\alpha\in\bbN}\right\rangle $.
    \item Projection operator:
        Choose an integer $M\geq2$ and let
        $\bbH^{\weight^{[u,\theta]}}_M= \rmspan\left\langle
        \{\mH_{\alpha}(\xi)\}_{\alpha\leq M}\right\rangle $. It is
        clear that the $\mH_{\alpha}(\xi)$ form an orthogonal basis of
        $\bbH^{\weight^{[u,\theta]}}$ and
        $\bPb=\boldsymbol{\mathrm{T}}$. Here \paperauthor{Grad} used a
        direct truncation of the distribution function, which
        corresponds to orthogonal projection, so we have
        $\bPp=\boldsymbol{\mathrm{T}}$.
    \item \Grad expansion: Expand the distribution
        function in the space $\bbH^{\weight^{[u,\theta]}}$
        \[
            f(t,x,\xi)=\sum_{\alpha\in\bbR}f_{\alpha}(t,x)\mH_{\alpha}(\xi)
            = \left\langle \bmH, \bdf\right\rangle_{\infty} ,
        \]
        where $\left\langle \cdot,\cdot\right\rangle_\infty $ is the
        inner product of infinite size vectors and
        $\bmH=(\mH_{\alpha}(\xi))_{\alpha\in\bbN}$ and
        $\bdf=(f_{\alpha})_{\alpha\in\bbN}$ are vectors of elements
        sorted by ascending order of $\alpha$.
    \item Constraints:
        \begin{equation}\label{eq:grad-constrain}
            f_1=f_2=0.
        \end{equation}
        So $\bw=(f_0, u,\theta, f_3,f_4,\dots)$ contains all the
        macroscopic parameters.
    \item Projection 1: Project the distribution function into
        $\bbH^{\weight^{[u,\theta]}}_M$:
        \[
            \mP f(t,x,\xi) = \left\langle \bPb\bmH, \bPp\bdf\right\rangle_{N} .
        \]
    \item Time and space derivative: for $s=t,x$
        \begin{equation}\label{eq:grad-timederivative}
            \begin{aligned}
                \pd{\mP f}{s}&=\left\langle \bPb\pd{\bmH}{s},\bPp\bdf\right\rangle_{N} +\left\langle \bPb\bmH,
                \bPp\pd{\bdf}{s}\right\rangle_{N} \\
                &=\left\langle \bPb\boldsymbol{\mathrm{C}}\bmH,
                \bPp\bdf\right\rangle_{N} +\left\langle \bPb\bmH,
                \bPp\pd{\bdf}{s}\right\rangle_{N} \\
                &=\left\langle \bmH,
                \boldsymbol{\mathrm{C}}^T\bPb^T\bPp\bdf
                +\bPb^T\bPp\pd{\bdf}{s}\right\rangle_{\infty}
                =\left\langle \bmH, \bD\bPb^T\pd{\bPp\bw}{s}\right\rangle_{\infty} .\\
            \end{aligned}
        \end{equation}
        Here $\boldsymbol{\mathrm{C}}$ is a matrix with infinite size and can
        be deduced directly from the derivative relation of the basis
        functions. The first $M+1$ rows of the matrix $\bD$ can be
        derived from
        $\boldsymbol{\mathrm{C}}^T\bPb^T\bPp\bdf+\bPb^T\bPp\pd{\bdf}{s}$ and
        $\bD$ is the same as in \eqref{eq:grad-system}.
    \item Multiplication with velocity:
        \begin{equation}\label{eq:grad-multiplyvelocity}
            \begin{aligned}
            \xi\pd{\mP f}{x}&=\left\langle \xi\bmH, \bD\bPb^T\pd{\bPp\bw}{x}\right\rangle_{\infty}
            =\left\langle \bM^T\bmH, \bD\bPb^T\pd{\bPp\bw}{x}\right\rangle_{\infty} \\
            &=\left\langle \bmH, \bM\bD\bPb^T\pd{\bPp\bw}{x}\right\rangle_{\infty}.
            \end{aligned}
        \end{equation}
        The matrix $\bM$ can be derived directly from the recurrence
        relation of the basis functions and is the same as in
        \eqref{eq:grad-system}.
    \item Projection 2: Project \eqref{eq:grad-timederivative} and
        \eqref{eq:grad-multiplyvelocity} into the space
        $\bbH^{\weight^{[\theta]}}_M$ and match the coefficients of the
        basis functions to obtain the moment system:
        \begin{equation}\label{eq:grad1d}
            \bPp\bD\bPb^T\pd{\bPp\bw}{t}+\bPp\bM\bD\bPb^T\pd{\bPp\bw}{x}=\bPp\bS.
        \end{equation}
        This finally yields \Grad $M+1$ moment
        system.
\end{enumerate}
Comparing \eqref{eq:grad1d} and \eqref{eq:grad-system}, we observe
that \Grad truncation and closure are corresponding to the projection
on the distribution function and the moment system. Actually, we can
also first obtain system \eqref{eq:grad-system} and then let
\[
    \bw_M=\bPp\bw,\quad \bD_M=\bPp\bD\bPb^T,\quad
    (\bM\bD)_M=\bPp\bM\bD\bPb^T,\quad \bS_M=\bPp\bS,
\]
to get \eqref{eq:grad-system-projected}, which is exactly the same as \eqref{eq:grad1d}.

Note, that we do not explicitly write down the matrices $\bD$ and $\bM$ here in order to shorten notation, but some examples for different cases are given in Section \ref{sec:phmsf}.

\subsection{Globally hyperbolic moment equations}
\label{sec:hme1d}
The hyperbolicity of system \eqref{eq:grad1d} requires $\bD_M$ to be
invertible and $\bD_M^{-1}(\bM\bD)_M$ to be real diagonalizable. It is easy
to check that $\bD_M$ is invertible, since $\bD_M$ is a lower triangular
matrix and its diagonal entries are all nonzero. However, in
\cite{Fan} \paperauthors{Cai} investigated the hyperbolicity of it and
concluded that for $M\geq 3$ \Grad moment system \eqref{eq:grad1d} is
only hyperbolic around the Maxwellian. A globally hyperbolic
regularization for \Grad moment system in 1D was proposed afterwards. In
\cite{framework}, \paperauthors{Cai} investigated the regularization
and gave an explanation from the viewpoint of the discrete velocity method
and based on the regularization, a generalized framework was proposed
to obtain a hyperbolic moment system based on any ansatz for the
kinetic equation.  In this subsection, we use a diagram of the regularization
proposed in \cite{Fan} to compare the treatments of time and space
derivatives for \Grad moment system and the regularization.

\begin{figure}[ht]
    \centering
    \includegraphics[width=\textwidth]{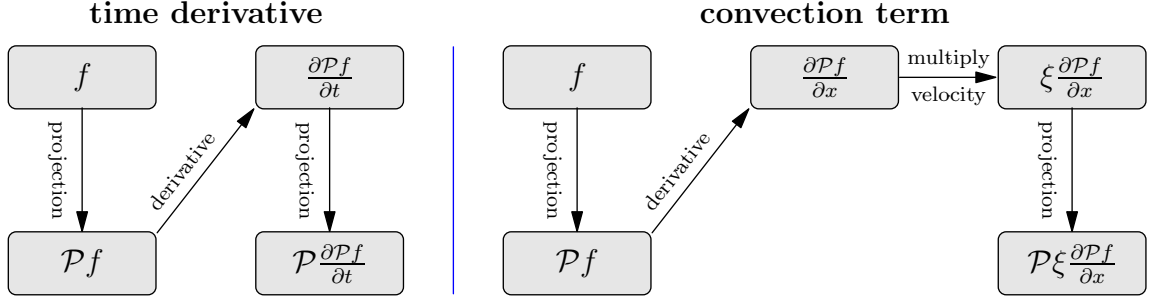}
    \caption{\label{fig:grad}
    Diagram for \Grad moment method for the 1D Boltzmann equation.
}
\end{figure}
\begin{figure}[ht]
    \centering
    \includegraphics[width=\textwidth]{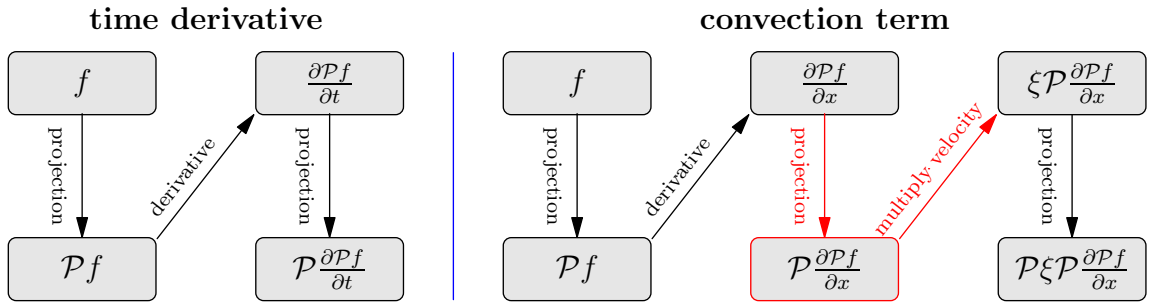}
    \caption{\label{fig:hme}
    Diagram for the regularization proposed in \cite{Fan} for the 1D
    Boltzmann equation.
}
\end{figure}

To derive \Grad moment equation, we need to calculate the time derivative
$\pd{f}{t}$ and the convection term $\xi\pd{f}{x}$. As shown
in Fig. \ref{fig:grad}, for the time derivative, the projection operator
directly acts on $\pd{\mP f}{t}$ after the time derivative. But for the
convection term, the projection operator acts on $\pd{\mP f}{x}$ after
multiplying with the velocity. That means \paperauthor{Grad} treated the time
and space derivative in different ways.
In the perspective of physics, if only the convection term is
considered in the Boltzmann equation, the system is time reversal
invariant, thus there is no essential difference for time and space.
Hence, it is natural to use the same treatment for time derivative and
space derivative. In the perspective of mathematics, the same
treatment for time and space derivatives indicates that the hyperbolicity
of the resulting moment system only depends on the operator representing the multiplication
with velocity and the hyperbolicity does not depend on the derivative operator, since
matrix similarity transformation preserves the matrix eigenstructure.
In fact, the hyperbolicity of the Boltzmann equation can be expanded since
the multiplication operator $\xi\cdot$ is real-valued, symmetric, and does
not depend on the time and space derivatives. In conclusion, it is a
natural choice to use the same treatment for the time and space
derivatives, as is shown in Fig. \ref{fig:hme}, which results in the
regularization proposed by \paperauthors{Cai} in \cite{Fan}.  Based on
the perspective in Fig. \ref{fig:hme}, the derivation of the
regularized moment system in \cite{Fan} can be written as
\begin{enumerate}
    \item[1.-6.] the same as the $1$st-$6$th step in Section
        \ref{sec:viewpoint_grad}.
\setcounter{enumi}{6}
    \item Projection 2: Project the space derivative
        \eqref{eq:grad-timederivative} into space
        $\bbH^{\weight^{[\theta]}}_M$
        \[
            \mP \pd{\mP f}{x}=\left\langle \bPb\bmH, \bPp\bD\bPb^T\pd{\bPp\bw}{x}\right\rangle_{N} .
        \]
    \item Multiplication with velocity:
        \begin{equation}\label{eq:grad-multiplyvelocity2}
            \begin{aligned}
            \xi\mP\pd{\mP f}{x}& 
            =\left\langle \bmH, \bM\bPb^T\bPp\bD\bPb^T\pd{\bPp\bw}{x}\right\rangle_{\infty}.
            \end{aligned}
        \end{equation}
    \item Projection 3: Project \eqref{eq:grad-timederivative} and
        \eqref{eq:grad-multiplyvelocity2} into the space
        $\bbH^{\weight^{[\theta]}}_M$ and match the coefficients of the
        basis functions to obtain the regularized moment system:
        \begin{equation}\label{eq:hme1d}
            \bPp\bD\bPb^T\pd{\bPp\bw}{t}+\bPp\bM\bPb^T\bPp\bD\bPb^T\pd{\bPp\bw}{x}=\bPp\bS.
        \end{equation}
        This finally yields the globally hyperbolic moment equations
        proposed in \cite{Fan}.
\end{enumerate}
Similar to \Grad moment system, we can first obtain system
\eqref{eq:grad-system} and then let
\[
    \bw_M=\bPp\bw,\quad \bD_M=\bPp\bD\bPb^T,\quad
    \bM_M=\bPp\bM\bPb^T,\quad \bS_M=\bPp\bS,
\]
to get
\begin{equation}
    \bD_M\pd{\bw_M}{t}+\bM_M\bD_M\pd{\bw_M}{x}=\bS_M.
\end{equation}
The upper system is exactly the same as \eqref{eq:hme1d}. That means
we can derive the moment system with infinite equations first without
considering the projection and then apply the projection to it to obtain
the corresponding equations. This observation will help us to
understand the difference between \Grad 13 moment system and \Grad 20
moment system, as well as the regularized versions of them.

\section{Generic Kinetic Equations}
\label{sec:framework}
In the last section, we investigated the regularized moment system proposed
in \cite{Fan}. In this section, we deduce and summarize the
characteristic of the regularization and extend it to a framework.
Based on the framework, different moment systems can be derived by
some routine calculations once the kinetic equation, the weight
function, the projection and the internal projection strategy are given and new
moment systems can be derived without essential difficulty. The
framework will be introduced step by step in this section. First, we
clarify the form of the kinetic equation.

\subsection{The form of the kinetic equation}
\label{sec:form_kineticequation}
It is natural to determine the kinetic equation before deducing the
moment system. We want to cover different kinetic equations in our framework and thus assume the following form of the kinetic equation
\begin{equation}\label{eq:kineticequation}
    \mL\left(\pd{}{t}; f, \bdeta_1, \bv(\bxi)\right) + \sum_{d=1}^D
    p_d(\bv(\bxi))\mL\left(\pd{}{x_d};f,\bdeta_1,\bv(\bxi)\right)=S(f),
\end{equation}
where $f=f(t,\bx,\bv)$, $\bdeta_1=\bdeta_1(t,\bx)$ is a vector of
macroscopic parameters \footnote{$\bdeta_1$ can be treated as a set,
but uniqueness demands that every element of $\bdeta_1$ cannot be
expressed by the others. For example, $\{\rho, \theta,p\}$ is not
allowed because $p=\rho\theta$, while $\{\rho,u,\theta\}$ is allowed.}
and $\mL\left( \pd{}{s};\cdot,\cdot,\cdot \right)$, for $s=t,x_d$ is an
operator. Furthermore $\bv(\bxi)$ is a function of $\bxi$ and
$p_d(\cdot)$ is a polynomial, which suffices to cover all major models.
Among others, the following important models are readily included in our framework:
\begin{itemize}
    \item Conventional Boltzmann equation \eqref{eq:boltzmann}:
    The standard Boltzmann equation is easily included in the framework by setting
        \[
            \bdeta_1=\emptyset,\quad
            \bv(\bxi)=\bxi,\quad p_d(\bv)=v_d,\quad
            \mL\left( \pd{}{s};f, \bdeta_1, \bv(\bxi)\right)=\pd{f}{s}, \; s=t,x_d.
        \]
    \item Scaled Boltzmann equation used in \cite{Koellermeier}:

        A transformed Boltzmann equation is obtained after shifting the
        microscopic velocity $\bxi$ by its macroscopic velocity $\bu$ and
        scaling by the standard deviation $\sqrt{\theta}$ to get a Galilean
        invariant variable transformation:
        \[
            \bxi\rightarrow\frac{\bxi-\bu}{\sqrt{\theta}}=:\bv.
        \]
        With this transformation, the Boltzmann equation \eqref{eq:boltzmann}
        is transformed to
        \begin{equation}\label{eq:scaled_boltzmann}
            \begin{aligned}
            \odd{f}{t}+\sum_{d=1}^D\sqrt{\theta}v_d\pd{f}{x_d}
            +\sum_{k=1}^D\pd{f}{v_k}{\Bigg(}
            &-\frac{1}{\sqrt{\theta}}\left(
            \odd{u_k}{t}+\sum_{d=1}^D\sqrt{\theta}v_d\pd{u_k}{x_d} \right)\\
            &-\frac{1}{2\theta}v_k\left(
            \odd{\theta}{t}+\sum_{d=1}^D\sqrt{\theta}v_d\pd{\theta}{x_d} \right)
            {\Bigg)}=S(f),
            \end{aligned}
        \end{equation}
        where the material derivative
        $\odd{}{t}:=\pd{}{t}+\sum_{d=1}^Du_d\pd{}{x_d}$ is used. In physical
        perspective, \eqref{eq:scaled_boltzmann} and \eqref{eq:boltzmann}
        depict the same physical process. In mathematical perspective, however, we
        treat the two equations as different models.

        We can include the transformed Boltzmann equation \eqref{eq:scaled_boltzmann} in our framework by setting
        \[
            \begin{aligned}
            \bdeta_1&=(u_1,\dots,u_D,\theta),\quad
            \bv(\bxi)=\frac{\bxi-\bu}{\sqrt{\theta}},\quad
            p_d(\bv)=u_d+\sqrt{\theta}v_d,\quad\\
            \mL\left( \pd{}{s};f, \bdeta_1, \bv(\bxi) \right)&=\pd{f}{s}-\sum_{k=1}^D\pd{f}{v_k}
            \left( \frac{1}{\sqrt{\theta}}\pd{u_k}{s}
            +\frac{1}{2\theta}v_k\pd{\theta}{s}
            \right),\; s=t, x_d.
            \end{aligned}
        \]
    \item Radiative transfer equation:

        The radiative transfer equation reads
        \begin{equation}\label{eq:radiativetransfer}
            \frac{1}{c}\pd{f}{t}+\bv(\bxi)\cdot\nabla f=S(f;T),
        \end{equation}
        where $c$ is the speed of light and $S(f;T)$ models interactions
        between photons and the background medium with material temperature
        $T$ and $\bv(\bxi)=\bxi/|\bxi|$.
        The radiative transfer equation \eqref{eq:radiativetransfer} is included in the framework by setting
        \[
            \bdeta_1=\emptyset,\quad \bv(\bxi)=\bxi/|\bxi|,\quad
            p_d(\bv)=c v_d,\quad
            \mL\left( \pd{}{s}; f, \bdeta_1, \bv(\bxi)\right)=\frac{1}{c}\pd{f}{s},\; s=t,x_d.
        \]
\end{itemize}
In this paper, we are not confined to the upper three cases, but
consider any kinetic equation of the form as
\eqref{eq:kineticequation}.

\subsection{The framework of model reduction}
\label{sec:framework_details}
Based on the form of the kinetic equation \eqref{eq:kineticequation},
we give a framework to derive a moment system from the kinetic equation.

\begin{enumerate}
    \item Weight function and weighted polynomial space:
        Denote the weight function by $\weight^{[\bdeta_2]}(\bv)$,
        where $\bdeta_2=\bdeta_2(t,\bx)$ is a set of some macroscopic
        parameters. Then the weighted polynomial space is
        $\bbH^{\weight^{[\bdeta_2]}}
        =\rmspan\left\langle \{\weight^{[\bdeta_2]}(\bv)\bv^\alpha\}_{\alpha\in\bbN^D}\right\rangle $,
        and let $\bPhi=(\phi_0,\phi_1,\dots,)^T$ be a basis.
    \item Projection operator: Choose an admissible subspace
        $\bbH^{\weight^{[\bdeta_2]}}_{sub}$ of
        $\bbH^{\weight^{[\bdeta_2]}}$ and determine the projection
        $\mP$, which means determining the two matrices $\bPb$ and
        $\bPp$.
    \item Ansatz: Expand the distribution function $f(t,\bx,\bv)$
        in the space $\bbH^{\weight^{[\bdeta_2]}}$
        \begin{equation}\label{eq:expansion_frame}
            f(t,\bx,\bv)=\sum_{\alpha\in\bbN^D}f_{\alpha}(t,\bx)\phi_{\alpha}(\bv)
            =\left\langle \bPhi, \bdf\right\rangle_{\infty} .
        \end{equation}
    \item Constraints: Denote $\bdeta=\bdeta_1\cup\bdeta_2$ and let
        $n$ be the cardinality of $\bdeta$. Then there must be $n$
        independent relations between $\bdeta$ and $\bdf$
        \begin{equation}\label{eq:constrain_frame}
            r_j(\bdeta,\bdf)=0,\quad j=1,\dots,n.
        \end{equation}
        Using \eqref{eq:constrain_frame} to eliminate $n$ parameters
        in $\bdeta, \bdf$, we denote the remaining by $\bw$.
    \item Projection 1: Project the distribution function into the
        space $\bbH^{\weight^{[\bdeta_2]}}_{sub}$
        \begin{equation}
            \mP f(t,\bx,\bv)=\left\langle \bPb\bPhi,\bPp\bdf\right\rangle_{N} .
        \end{equation}
    \item Time and space derivative: For $s=t, x_d$, calculate
        $\mL\left( \pd{}{s},\dots \right)$
        with an internal projection strategy $PS_1$
        \begin{equation}\label{eq:derivative_frame}
            \mL\left( \pd{}{s},\dots
            \right)\rightarrow\mL^{PS_1}\left( \pd{}{s},\dots
            \right)=\left\langle \bPhi,
            \bD_{PS_1}\bPb^T\pd{\bPp\bw}{s}\right\rangle_{\infty} ,
        \end{equation}
        where $\bD_{PS_1}$ depends on $\mL\left( \pd{}{s},\dots
        \right)$ and the internal projection strategy. In deriving $\bD_{PS_1}$,
        the projection may be used, and Section \ref{sec:QBME} gives an
        example.
    \item Projection 2: Project the resulting time and space derivative into the space
        $\bbH^{\weight^{[\bdeta_2]}}_{sub}$
        \begin{equation}\label{eq:projection2_frame}
            \mP \mL^{PS_1}\left( \pd{}{s},\dots \right)= \left\langle \bPb\bPhi,
            \bPp\bD_{PS_1}\bPb^T\pd{\bPp\bw}{s}\right\rangle_{N} .
        \end{equation}
    \item Multiplication with velocity: For $d=1,\dots,D$, calculate
        $p_d(\bv)\mP\mL^{PS_1}\left( \pd{}{s},\dots \right)$ with an
        internal projection strategy $PS_2$
        \begin{equation}\label{eq:multiplyvelocity_frame}
            p_d(\bv)\mP\mL^{PS_1}\left( \pd{}{s},\dots \right)\rightarrow
            \left\langle \bPhi,\bM_{d,l}\bPb^T\bPp\dots\bM_{d,1}\bPb^T\cdot\bPp
            \bD_{PS_1}\bPb^T \pd{\bPp\bw}{x_d}\right\rangle_{\infty} ,
        \end{equation}
        where $l$ is a positive integer and $\bM_{d,i}$,
        $i=1,\dots,l$ are matrices depending on $p_d(\bv)\bPhi$ and the
        internal projection strategy. See Remark \ref{rm:projectionstrategy}
        for details of the upper equations and the internal projection strategy
        $PS_2$. In the following we use $\bM_{d,PS_2}$ to denote
        $\bM_{d,l}\bPb^T\bPp\dots\bM_{d,1}$.
    \item Projection 3: Project \eqref{eq:multiplyvelocity_frame} into
        the space $\bbH^{\weight^{[\bdeta_2]}}_{sub}$ and match the
        coefficients of basis functions $\bPhi$, then obtain the moment
        system
        \begin{equation}\label{eq:momentsystem_frame}
            \begin{aligned}
                &\bPp\bD_{PS_1}\bPb^T\pd{\bPp\bw}{t}
                +\sum_{d=1}^D
                \bPp\bM_{d,PS_2}\bPb^T
                \bPp\bD_{PS_1}\bPb^T\pd{\bPp\bw}{x_d}
                =\bPp\bS,
            \end{aligned}
        \end{equation}
        where $\bS$ is obtained by expansion of the collision part
        $S(f)$, which is not studied in this paper.
\end{enumerate}
\begin{remark}\label{rm:projectionstrategy}
    In the procedure of multiplying velocity, there may be several
    operations involved. As an example we consider $p_d(\bv)=v_d^2$ and we denote the matrix
    $\bM_d$ satisfying $v_d\bPhi=\bM^T_d\bPhi$, then
    $p_d\bPhi=v_d(v_d\bPhi)=v_d\bM_d^T\bPhi=\bM_d^T\bM_d^T\bPhi$.
    Thus, we have two choices for the multiplication with velocity:
    \begin{enumerate}
        \item first compute $v_d\bPhi$ and apply a projection, then perform the
            other multiplication with velocity. This corresponds to $l=2$ and
            $\bM_{d,1}=\bM_{d,2}=\bM_d$.
        \item directly compute $v_d^2\bPhi$. This corresponds to $l=1$ and
            $\bM_{d,1}=\bM_d^2$.
    \end{enumerate}
    If $p_d(\bv)$ is more complex, there are more choices. We call
    each choice an internal projection strategy $PS_2$. Naturally, different
    choices usually yield different moment systems. Here we consider
    the case where $p_d(\bv)$ can be factorized as
    $p_d(\bv)=\prod_{i=1}^lp_d^{(i)}(\bv)$, then $\bM_{d,i}$ satisfies
    $p_d(\bv)\bPhi=\bM_{d,i}^T\bPhi$.
    Similarly, in the procedure of calculating time and space
    derivative, there may be several operations, which result in
    several choices to calculating time and space derivative. We call
    each choice an internal projection strategy $PS_1$, respectively.
\end{remark}
\begin{remark}\label{rm:dependence}
    In the framework, it is assumed that $\bPp$ is commutative with
    the time and space derivative, which means $\bPp$ is independent
    of $\bdeta_1$. Actually, if $\mP$ is an orthogonal projection and
    the basis function is an orthogonal basis, this assumption is
    always valid.
    
    Besides, the ``derivative'' matrix $\bD_{PS_1}$ usually depends on
    the variables $\bw$, e.g. $\bD_{PS_1}=\bD_{PS_1}(\bw)$. After
    projection, the matrix $\bPp\bD_{PS_1}\bPb^T$ must depend only on
    the projected variables $\bPp\bw$ due to the moment closure.
    Actually, we implicitly used the condition:
    $\bPp\bD_{PS_1}\bPb^T=\bPp\bD_{PS_1}(\bPb\bPp\bw)\bPb^T$.
    Similarly,
    $\bPp\bM_{d,PS_2}\bPb^T=\bPp\bM_{d,PS_2}(\bPb\bPp\bw)\bPb^T$ and
    $\bPp\bS=\bPp\bS(\bPb\bPp\bw)$.
\end{remark} 

\subsection{Discussion on the framework}
Actually, the framework in Section \ref{sec:framework_details} almost
provides an algorithm to derive moment systems from kinetic equation.
In this subsection, we dissect the procedure in detail and study the
inputs and properties of the resulting moment system.

\subsubsection{Inputs}
Taking a closer look at the framework in Section
\ref{sec:framework_details}, we find that once the weight function is
given, the weighted polynomial space $\bbH^{\weight^{[\bdeta_2]}}$ is
determined and the ansatz and constraints in the $3$rd and $4$th step of the
framework are also decided. Once the projection operator $\mP$ is
given, all the projections in the $5$th, $7$th and $9$th step are fixed. For
the calculations of the time and space derivative and the
multiplication with velocity, only the internal projection strategy affects the
result. Hence, to derive a moment system based on the framework in
Section \ref{sec:framework_details}, the following information is
needed:
\begin{itemize}
    \item A kinetic equation of the form as in
        \eqref{eq:kineticequation};
    \item Weight function;
    \item Projection operator;
    \item Internal projection strategies $PS_1$ and $PS_2$.
\end{itemize}

As discussed in Section \ref{sec:form_kineticequation}, the form of
the kinetic equation implicates the treatment of the kinetic equation.

The weight function represents some knowledge of the distribution
function. \paperauthor{Grad} used the Maxwellian as the weight function because he
assumed the distribution function is not far away from the
Maxwellian. In \cite{ANRxx}, in order to deal with the anisotropic
distribution function of the Boltzmann equation, \paperauthor{Fan} and
\paperauthor{Li} used a more general Gaussian function as the weight
function. Hence, it is possible to include some prior knowledge of the
distribution function in the weight function, to derive specific
moment systems for some specific questions.

The projection operator largely influences the type of the moment
system. For the conventional Boltzmann equation and the Maxwellian
as the weight function, one projection operator may yield the
regularized version of \Grad 13 moment system (G13) while another one
may yield the regularized version of \Grad 20 moment system (G20).
Even if all the upper three inputs are given, it is possible to obtain
different moment system with different internal projection strategies. So the
internal projection strategy offers some freedom.

As we will see in the later examples and applications, the projection
operators can for example correspond to a truncation or a cut-off
during the computation of the moment system. This will be most obvious
in case of HME and QBME, which are very similar in this new framework.
The operator projection framework thus also yields a mathematically
precise method to describe the procedures of these different
approaches in a unified way.

Summarized, the form of the kinetic equation implicates the treatment
of the kinetic equation. The weight function represents some knowledge
of the distribution function and allows us to include a-priori
information of the distribution function in the moment system.  The
projection operator and internal projection strategy determine which type of
moment system we need and leave us some freedom for the moment system.
Once the four inputs are given, the moment system can be mechanically
derived following the framework.

\subsubsection{Pragmatic viewpoint}
\label{sec:framework_view}
As discussed in the last part of Section \ref{sec:hme1d}, the
internal projection strategy vanishes if we do not apply any projection in the
framework, which is identical to setting $\bPb=\bPp=\identity$.
The resulting moment system then reads
\begin{equation}\label{eq:momentsystem_withoutprojection}
    \bD\pd{\bw}{t}+\sum_{d=1}^D\bM_{d}\bD\pd{\bw}{x_d}=\bS.
\end{equation}
Actually, to derive \eqref{eq:momentsystem_frame}, we can first
neglect the projection operators and obtain
\eqref{eq:momentsystem_withoutprojection}, then afterwards perform the
projections, which can be treated as using $\bPp\bw$ and $\bPp\bS$ to
take the place of $\bw$ and $\bS$, respectively, and use
$\bPp\bD\bPb^T$ and $\bPp\bM_d\bPb^T$ to take the place of $\bD$ and
$\bM_d$. This allows us to choose the weight function first, and then
obtain the moment system containing infinite equations, and finally to
determine the projection.

As emphasized in Section \ref{sec:hme1d}, we note that it is essential
to treat time and space derivative in the same way, which corresponds
to the same internal projection strategy $PS_1$ for time and space derivative.
This is essential for the hyperbolicity of the moment system. Using
the same internal projection strategy $PS_2$ for $\bM_{d,i}$ for different
directions $x_d$ is also obligatory, which corresponds to the
rotational invariance of the resulting moment system. Precisely, the
resulting moment system is always Galilean invariant, since the
subspace is admissible.

\subsubsection{Hyperbolicity of the reduced models}
According to the definition of hyperbolicity, the moment system
\eqref{eq:momentsystem_frame} is hyperbolic if
\begin{enumerate}
    \item $\bPp\bD_{PS_1}\bPb^T$, is invertible;
    \item any linear combination of $\bPp\bM_{d,PS_2}\bPb^T$ is
        diagonalizable with real eigenvalues.
\end{enumerate}

To study the matrix $\bPp\bM_{d,PS_2}\bPb^T$, we denote
$\{\tilde{\varphi}_0,\tilde{\varphi}_1,\dots,\tilde{\varphi}_{N-1}\}$
orthonormal basis of the $N$-dimensional space $\bbH_{sub}^{\weight}$,
satisfying
$
(\tilde{\varphi}_i,\tilde{\varphi}_j)_{\weight}=\delta_{i,j},\;
    i,j=0,\dots,N-1,
$
and denote $\{\tilde{\phi}_0,\dots,\tilde{\phi}_n,\dots\}$ as
orthonormal basis of $\bbH^{\weight}$ with
$\tilde{\phi}_i=\tilde{\varphi}_i, i=0,\dots,N-1$ and
$(\tilde{\phi}_i,\tilde{\phi}_j)_{\weight}=\delta_{i,j}, i,j\in\bbN$,
where $\tilde{\phi}$ is dependent on $\bdeta_2$. Then there exists a
non-singular matrix $\bQ$ such that
$\boldsymbol{{\varphi}}=\bQ^T\boldsymbol{\tilde{\varphi}}$.  In the
new basis, we denote $\tbPp$, $\tbPb$, $\tilde{\bD}_{PS_1}$,
$\tilde{\bM}_{d,PS_2}$, $\tilde{\bM}_{d,k}, k=1,\dots,l$ and
$\tilde{\bw}$ with the same definitions as the symbols without the
$\tilde{\dot}$. Then the resulting moment system can be written as
\begin{equation}
    \bQ \bPp\bD_{PS_1}\bPb^T\pd{\bPp\bw}{t} +\sum_{d=1}^D
    \tbPp\tilde{\bM}_{d,PS_2}\tbPb^T \bQ\bPp\bD_{PS_1}\bPb^T\pd{\bPp\bw}{x_d}
    =\bQ\bPp\bS.
\end{equation}
Hence, we have
\begin{equation}\label{eq:tildebM}
    \bQ^{-1}\tbPp\tilde{\bM}_{d,PS_1}\tbPb\bQ = \bPp\bM_{d,PS_2}\bPb.
\end{equation}
Since $\tilde{\bM}_{d,k}$, $k=1,\dots,l$ is defined by
$p_d^{(k)}(\bv)\tilde{\bPhi}=\tilde{\bM}_{d,k}\tilde{\bPhi}$, and
$\tilde{\bPhi}$ is an orthonormal basis, we have
\begin{equation}\label{eq:tildebMd}
    \tilde{\bM}_{d,k}=\left(
    \left(p_d^{(k)}(\bv)\tilde{\phi}_i,\tilde{\phi}_j\right)_{\weight}
    \right), \quad k=1,\dots,l.
\end{equation}
With \eqref{eq:tildebM} and \eqref{eq:tildebMd}, we immediately get
the following criterion on the real diagonalizability of
$\bPp\bM_{d,PS_2}\bPb^T$.
\begin{theorem}\label{thm:diagonalizable}
    If the projection operator $\mP$ is an orthogonal projection, and
    $p_d^{(k)}(\bv)$, $k=1,\dots,l$ satisfy
    $p_d^{(k)}(\bv)=p_d^{(l+1-k)}(\bv)$, then any linear combination
    of $\bPp\bM_{d,PS_2}\bPb^T$ is diagonalizable with real
    eigenvalues.
\end{theorem}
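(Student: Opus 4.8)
The plan is to pass to the orthonormal bases $\{\tilde{\varphi}_i\}$ and $\{\tilde{\phi}_i\}$ introduced just above the statement and to exploit that, for an orthogonal projection, everything collapses to a cut-off. Since $\mP$ is orthogonal and the two orthonormal bases are chosen with $\tilde{\phi}_i=\tilde{\varphi}_i$ for $i=0,\dots,N-1$, the orthogonal projection onto $\bbH^{\weight}_{sub}$ is exactly the truncation discussed in Section~\ref{sec:preliminary}, i.e. $\tbPp=\tbPb=\boldsymbol{\mathrm{T}}=\begin{pmatrix}\identity_N & \boldsymbol{0}\end{pmatrix}$. By the change-of-basis relation \eqref{eq:tildebM} one has $\bPp\bM_{d,PS_2}\bPb^T=\bQ^{-1}\,\tbPp\tilde{\bM}_{d,PS_2}\tbPb^T\,\bQ$, so it suffices to prove that any real linear combination of the matrices $\tbPp\tilde{\bM}_{d,PS_2}\tbPb^T$ is real-diagonalizable; both diagonalizability and the reality of the spectrum are preserved under conjugation by $\bQ$.

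Next I would record the symmetry of the building blocks. By \eqref{eq:tildebMd} the $(i,j)$ entry of $\tilde{\bM}_{d,k}$ is $(p_d^{(k)}(\bv)\tilde{\phi}_i,\tilde{\phi}_j)_{\weight}$; since $p_d^{(k)}(\bv)$ is real-valued, multiplication by it is self-adjoint for $(\cdot,\cdot)_{\weight}$, so $\tilde{\bM}_{d,k}$ is a real symmetric matrix. Define the $N\times N$ principal block $\boldsymbol{B}_{d,k}:=\boldsymbol{\mathrm{T}}\tilde{\bM}_{d,k}\boldsymbol{\mathrm{T}}^T$, which is again symmetric. Substituting $\tbPp=\tbPb=\boldsymbol{\mathrm{T}}$ into $\tilde{\bM}_{d,PS_2}=\tilde{\bM}_{d,l}\tbPb^T\tbPp\cdots\tilde{\bM}_{d,1}$ and using $\boldsymbol{\mathrm{T}}\boldsymbol{\mathrm{T}}^T=\identity_N$, the interior cut-offs telescope and yield the clean product
\begin{equation*}
    \tbPp\tilde{\bM}_{d,PS_2}\tbPb^T
    =\boldsymbol{B}_{d,l}\boldsymbol{B}_{d,l-1}\cdots\boldsymbol{B}_{d,1}.
\end{equation*}

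The heart of the argument is the palindromic factorization. The hypothesis $p_d^{(k)}(\bv)=p_d^{(l+1-k)}(\bv)$ gives $\boldsymbol{B}_{d,k}=\boldsymbol{B}_{d,l+1-k}$, so the product above is a palindrome of symmetric matrices. Setting $\boldsymbol{R}_d:=\boldsymbol{B}_{d,\lfloor l/2\rfloor}\cdots\boldsymbol{B}_{d,1}$ and using that each factor is symmetric (so $\boldsymbol{R}_d^T=\boldsymbol{B}_{d,1}\cdots\boldsymbol{B}_{d,\lfloor l/2\rfloor}$), I would fold the product from both ends to obtain
\begin{equation*}
    \tbPp\tilde{\bM}_{d,PS_2}\tbPb^T=
    \begin{cases}
        \boldsymbol{R}_d^T\,\boldsymbol{R}_d, & l \text{ even},\\
        \boldsymbol{R}_d^T\,\boldsymbol{B}_{d,(l+1)/2}\,\boldsymbol{R}_d, & l \text{ odd},
    \end{cases}
\end{equation*}
where in the odd case the central block $\boldsymbol{B}_{d,(l+1)/2}$ is itself symmetric. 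In either case $\tbPp\tilde{\bM}_{d,PS_2}\tbPb^T$ is a symmetric matrix.

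Finally I would assemble the conclusion: since each $\tbPp\tilde{\bM}_{d,PS_2}\tbPb^T$ is symmetric, any real linear combination $\sum_{d=1}^D n_d\,\tbPp\tilde{\bM}_{d,PS_2}\tbPb^T$ is symmetric, hence orthogonally diagonalizable with real eigenvalues; conjugating by $\bQ$ transfers this to $\sum_d n_d\,\bPp\bM_{d,PS_2}\bPb^T$, which proves the theorem. I expect the only genuinely nontrivial step to be the palindromic folding into $\boldsymbol{R}_d^T\boldsymbol{R}_d$ (resp. $\boldsymbol{R}_d^T\boldsymbol{B}\,\boldsymbol{R}_d$); the reduction of the orthogonal projection to a cut-off and the self-adjointness of the multiplication operators are essentially bookkeeping once the orthonormal bases are in place. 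A minor point to check carefully is that the telescoping of the interior $\boldsymbol{\mathrm{T}}^T\boldsymbol{\mathrm{T}}$ factors is exactly what produces the block product, and that the infinite size of $\tilde{\bM}_{d,k}$ causes no trouble because only finite principal blocks survive the outer truncations.
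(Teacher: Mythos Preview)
Your proof is correct and follows the same route as the paper: pass to the orthonormal basis so that $\tbPp=\tbPb=\boldsymbol{\mathrm{T}}$, observe that each $\tilde{\bM}_{d,k}$ (and hence each truncated block $\boldsymbol{B}_{d,k}$) is symmetric, use the palindromic hypothesis to conclude that the product $\boldsymbol{B}_{d,l}\cdots\boldsymbol{B}_{d,1}$ is symmetric, and transport the conclusion back via \eqref{eq:tildebM}. The paper states the symmetry of the product in one line, while you spell out the folding into $\boldsymbol{R}_d^T\boldsymbol{R}_d$ or $\boldsymbol{R}_d^T\boldsymbol{B}_{d,(l+1)/2}\boldsymbol{R}_d$; this is the same argument, only more explicit. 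One cosmetic remark: the identity $\boldsymbol{\mathrm{T}}\boldsymbol{\mathrm{T}}^T=\identity_N$ is not actually needed for the ``telescoping''---you simply regroup $\boldsymbol{\mathrm{T}}\tilde{\bM}_{d,l}(\boldsymbol{\mathrm{T}}^T\boldsymbol{\mathrm{T}})\tilde{\bM}_{d,l-1}\cdots(\boldsymbol{\mathrm{T}}^T\boldsymbol{\mathrm{T}})\tilde{\bM}_{d,1}\boldsymbol{\mathrm{T}}^T$ as $(\boldsymbol{\mathrm{T}}\tilde{\bM}_{d,l}\boldsymbol{\mathrm{T}}^T)\cdots(\boldsymbol{\mathrm{T}}\tilde{\bM}_{d,1}\boldsymbol{\mathrm{T}}^T)$ by associativity.
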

\begin{proof}
    As $\mP$ is an orthogonal projection, we have
    $\tbPp=\tbPb=\boldsymbol{\mathrm{T}}$.
    Since
    \[
        \left(p_d^{(k)}(\bv)\tilde{\phi}_i,\tilde{\phi}_j\right)_{\weight}
        =
        \left(p_d^{(k)}(\bv)\tilde{\phi}_j, \tilde{\phi}_i\right)_{\weight},
        \quad k=1,\dots,l, d=1,\dots,D,
    \]
    $\tilde{\bM}_{d,k}$ and $\bPp\tilde{\bM}_{d,k}\bPb^T$ are
    symmetric matrices.
    Due to $p_d^{(k)}(\bv)=p_d^{(l+1-k)}(\bv)$, we have
    $\tilde{\bM}_{d,k}=\tilde{\bM}_{d,l+1-k}$, and further
    $\tbPp\tilde{\bM}_{d,PS_2}\tbPb^T=\tbPp\tilde{\bM}_{d,l}\tbPb\dots
    \tbPp\tilde{\bM}_{d,1}\tbPb$, $d=1,\dots,D$ are symmetric matrices.
    Hence, any linear combination of $\tbPp\tilde{\bM}_{d,PS_2}\tbPb^T$
    is diagonalizable with real eigenvalues. \eqref{eq:tildebM}
    indicates the conclusion of the theorem is valid.
\end{proof}

In practice, to derive moment equations, we most often use an
orthogonal projection since it corresponds to the ``cut-off''. Thus
the condition on the projection is almost satisfied. For almost all
kinetic equations, $p_d(\bv)$ is a linear polynomial. Even for some
complex $p_d(\bv)$, using some complex internal projection strategy is not
usual.  Hence, the condition on $p_d(\bv)$ is easy to fulfill. So the
model \eqref{eq:momentsystem_frame} is globally hyperbolic for most
situations, only if the matrix $\bPp\bD_{PS_1}\bPb^T$ is invertible.

Next we consider the matrix $\bPp\bD_{PS_1}\bPb^T$. In this framework,
$\bPp\bw$ can be seen as the parameters to construct a distribution
function $\mP f(\bw; \bxi)$ in $\bbH_{sub}^{\weight}$ to approximate
the solution of the kinetic equation. Generally, it is not permitted
that two different $\bw$ correspond to one distribution function or
one operator $\mL(\pd{}{s};\dots)$, i.e.
\[
    \begin{aligned}
        \bw^0\neq\bw^1\Longrightarrow &\mP f(\bw^0;\bxi)\neq\mP
        f(\bw^1;\bxi),\;~
        \bdeta_1(\bw^0)\neq\bdeta_1(\bw^1),\;~\\
        &\mL(\pd{}{s}; \mP f(\bw^0;\bxi), \bdeta_1(\bw^0),\bv)\neq
        \mL(\pd{}{s}; \mP f(\bw^1;\bxi), \bdeta_1(\bw^1),\bv).
    \end{aligned}
\]
Hence, if the operator $\mL(\pd{}{s};\dots)$ and the weight function
$\weight$ are not singular for some $\bw$, $\bPp\bD_{PS_1}\bPb^T$ is
general invertible.

Before we end this section, we would like to point out that the
framework proposed in this section provides a general model reduction
strategy from kinetic equation to moment equations. The framework is
so concise that we need only routine calculations to obtain a usually
globally hyperbolic moment system. But we also need to point out
whether the moment system is easy to implement or not usually depends
on whether the coefficients of the system are explicit or tractable,
which are significantly up to the ansatz. We will give several
examples, e.g. Section \ref{sec:hme}, \ref{sec:ahme},
\ref{sec:grad13}, \ref{sec:QBME} to show the coefficients of the
moment system are usually explicit and tractable, while the example
Levermore's maximum entropy in Section \ref{sec:Levermore} shows an
opposite side.

\section{Previous Models}
\label{sec:phmsf}
An advantage of the framework is its applicability. Almost all the
traditional moment systems can be derived from the framework. In this
section, we will give several examples of moment systems for the
Boltzmann equation and the radiative transfer equation derived using
the operator projection framework, before we also show an example with
varying projection operators.

\subsection{Hyperbolic moment equations}
\label{sec:hme}
In Section \ref{sec:mmbe}, \Grad moment system for the 1D
Boltzmann equation is studied in detail, and the globally hyperbolic
regularization, proposed in \cite{Fan}, is investigated. Now, we study
the multi-dimensional case. \Grad
moment system of arbitrary order is first proposed in \cite{NRxx}, and in \cite{Fan_new}
the authors investigated the hyperbolicity of it and concluded that
the moment system with order greater than $3$ is not globally
hyperbolic. A globally hyperbolic regularization for it is proposed in
that paper, and here we call the resulting moment system the hyperbolic
moment equations(HME).

For HME, the kinetic equation is the conventional Boltzmann
equation \eqref{eq:boltzmann}, i.e.
\[
    \bdeta_1=\emptyset,\quad
    \bv(\bxi)=\bxi,\quad p_d(\bv)=v_d,\quad
    \mL\left( \pd{}{s}; f,\bdeta_1,\bv \right)=\pd{f}{s}, \; s=t,x_d, d=1,\dots,D.
\]
The weight function is a scaled Maxwellian
\[
\weight^{[\bu,\theta]}=\frac{1}{\sqrt{2\pi\theta}}\exp\left(
-\frac{|\bxi-\bu|^2}{2\theta}
\right).
\]
Then the orthogonal weighted polynomials are defined by
\[
    \mbH_{\alpha}(\bxi)=(-1)^{|\alpha|}\od{^\alpha}{\bxi^{\alpha}}\weight^{[\bu,\theta]},
    \quad \alpha\in\bbN^D,\quad |\alpha|=\sum_{d=1}^D\alpha_d,
\]
which form a basis function of $\bbH^{\weight^{[\bu,\theta]}}$.
We have $\bdeta=\{\bu,\theta\}$, and some calculations yield the constrain
\[
    f_{e_i}=0, \quad i=1,\dots,D,\quad
    \sum_{d=1}^Df_{2e_d}=0.
\]
Hence, we use $u_i$ to replace $f_{e_i}$ and $\theta/2$ to replace
$f_{2e_1}$ in $\bdf$, then set the resulting vector as $\bw$. For
convenience, we denote the consecutive number of $f_{\alpha}$ in $\bdf$ as
$\mN(\alpha)$.

We choose a positive integer $M\geq 2$, the subspace is then defined as
$\bbH_{sub}^{\weight^{[\bu,\theta]}}= \rmspan\left\langle
\left\{\mbH_{\alpha}(\bxi)\right\}_{|\alpha|\leq M} \right\rangle $,
which is an admissible subspace. The projection operator is
chosen as the orthogonal projection, i.e.
$\bPb=\bPp=\boldsymbol{\mathrm{T}}$.  Since $p_d(\bv)$ is a linear
polynomial and $\mL_s$ is only a simple derivative, the projection
strategy vanishes.

With these inputs, we start to derive the moment system. Since
\[
    \begin{aligned}
    \mL\left( \pd{}{s}; f,\bdeta_1,\bv \right)
    &=\sum_{\alpha\in\bbN^D}\mbH_{\alpha}\left(
    \pd{f_{\alpha}}{s}+\sum_{d=1}^Df_{\alpha-e_d}\pd{u_d}{s}
    +\frac{1}{2}\pd{\theta}{s}\sum_{d=1}^Df_{\alpha-2e_d}
    \right),
    \end{aligned}
\]
the matrix $\bD=(d_{ij})$ satisfies
\[
    \begin{aligned}
    &d_{\mN(\alpha),\mN(\alpha)}=1,\quad
    d_{\mN(\alpha), \mN(e_d)}=f_{\alpha-e_d},\quad
    d_{\mN(\alpha), \mN(2e_1)}=\sum_{d=1}^Df_{\alpha-2e_d},\quad
    |\alpha|\neq 1, \text{ and } \alpha\neq 2e_1;\\
    &d_{\mN(e_d),\mN(e_d)}=\rho,\quad
    d_{\mN(2e_1),\mN(2e_1)}=\rho,\quad
    d_{\mN(2e_1),\mN(2e_i)}=-1, \quad d=1,\dots,D, i=2,\dots,D,
    \end{aligned}
\]
and all entries not defined above are zeros. It is easy to observe
that $\bD$ is a block lower triangular matrix, and only the diagonal
block corresponding to rows and
columns from $\mN(2e_1)$ to $\mN(2e_D)$ is a big block, the others are all $1\times1$ blocks and the
entry of the block is nonzero. Hence, we just need to study the big
block, and denote it by $\bD_{\theta}$. For convenience, we just study
the case $D=2$, and it is easy to extend it to the general case. Then
\[
    \bD_{\theta}=\begin{pmatrix}
        \rho    &   0   &   -1\\
        0   &   1   &   0\\
        \rho    &   0   &   1
    \end{pmatrix},\qquad
    \mathrm{det}(\bD_{\theta})=2\rho\neq0,
\]
so the matrix $\bD$ is invertible.

The property of Hermite polynomials give
\[
    \xi_d\mbH_{\alpha}=\theta\mbH_{\alpha+e_d}+u_d\mbH_{\alpha}+\alpha_d\mbH_{\alpha-e_d},
\]
which indicates the form of the matrix $\bM_d$. Since the projection
$\mP$ is an orthogonal projection and $p_d(\bv)$ is a linear polynomial,
Theorem \ref{thm:diagonalizable} indicates the system
\[
    \bPp\bD\bPb^T\pd{\bPp\bw}{t}+
    \sum_{d=1}^D\bPp\bM_d\bPb^T\bPp\bD\bPb^T\pd{\bPp\bw}{x_d}=\bPp\bS.
\]
is hyperbolic. We point out that if we do not perform the projection before
multiplying with the velocity, the resulting moment system turns into
\[
    \bPp\bD\bPb^T\pd{\bPp\bw}{t}+
    \sum_{d=1}^D\bPp\bM_d\bD\bPb^T\pd{\bPp\bw}{x_d}=\bPp\bS,
\]
which is \Grad moment system in \cite{NRxx}.

\subsection{Anisotropic hyperbolic moment equations}
\label{sec:ahme}
HME uses one temperature in the weight function and treats different
directions in the same way. For some anisotropic distribution
functions, for example $f=\frac{\rho}{a(\pi)^{3/2}}\exp\left(
-\frac{\xi_1^2}{a^2}-\xi_2^2-\xi_3^2
\right)$, where $a$ is positive constant, if $a$ is far from $1$, HME
cannot capture this well or even fails to work. In \cite{ANRxx},
\paperauthor{Fan} and \paperauthor{Li} use a Gaussian rather than a Maxwellian as the
weight function and derive an anisotropic hyperbolic moment
equations(AHME). Next, we give a concise derivation of it in our newly proposed framework.

The main difference of AHME from HME is its weight function. Here we use a
Gaussian
\[
    \weight^{[\bu,\Theta]}(\bxi)=\frac{\rho}{\sqrt{\mathrm{det}(2\pi\Theta)}}\exp\left(
    -\frac{1}{2}(\bxi-\bu)^T\Theta^{-1}(\bxi-\bu)
     \right),
\]
where $\Theta=(\theta_{ij})_{D\times D}$, and
$\theta_{ij}=p_{ij}/\rho$. The definition of $p_{ij}$ indicates the
matrix $\Theta$ is positive definite. With the weight function, we
define the generalized Hermite polynomials
\[
    \mHT(\bxi)=(-1)^{|\alpha|}\od{^{\alpha}}{\bx^{\alpha}}\weight^{[\bu,\Theta]},
    \quad \alpha\in\bbN^D,
\]
which are basis functions of $\bbH^{\weight^{[\bu,\Theta]}}$ and
$\bdeta=\{\bu,\Theta\}$. Some calculations yield the constraints
\[
    f_{e_i}=0,\quad f_{e_i+e_j}=0,\quad i,j=1,\dots,D,
\]
so we replace $f_{e_i}$ by $u_i$ and $f_{e_i+e_j}$ by
$\theta_{ij}/(1+\delta_{ij})$ in $\bdf$ and let $\bw$ be the resulting
vector.

We choose an positive integer $M\geq2$, the subspace is then defined as
$\bbH_{sub}^{\weight^{[\bu,\Theta]}}= \rmspan \left\langle
\left\{\mHT_{\alpha}(\bxi)\right\}_{|\alpha|\leq M}\right\rangle $,
which is an admissible subspace. The projection operator is
chosen as the orthogonal projection, and the quasi-orthogonal
property, i.e.  $(\mHT_{\alpha},\mHT_{\beta})_{\weight^{[\bu,\Theta]}}
=\mathrm{Const}_{\alpha}\prod_{d=1}^D\delta_{\alpha_d,\beta_d}$,
indicates $\bPb=\bPp=\boldsymbol{\mathrm{T}}$. As for HME, the projection
strategy vanishes.

With these inputs, the moment system can be derived as follows. Since
\[
    \mL\left( \pd{}{s}; f,\bdeta_1,\bv \right)
    =\sum_{\alpha\in\bbN^D}\mHT_{\alpha}\left(
    \pd{f_{\alpha}}{s}+\sum_{i=1}^Df_{\alpha-e_i}\pd{u_i}{s}
    +\sum_{i,j=1}^D\frac{f_{\alpha-e_i-e_j}}{2}\pd{\theta_{ij}}{s}
    \right)
\]
the matrix $\bD=(d_{ij})$ satisfies, for $1\leq i\leq j\leq D$,
\[
    \begin{aligned}
        &d_{\mN(\alpha),\mN(\alpha)}=1,\quad
        d_{\mN(\alpha),\mN(e_i)}=f_{\alpha-e_i},\quad
        d_{\mN(\alpha),\mN(e_i+e_j)}=f_{\alpha-e_i-e_j},\quad
        |\alpha|\neq 1,2\\
        &d_{\mN(e_i),\mN(e_i)}=\rho,\quad
        d_{\mN(e_i+e_j),\mN(e_i+e_j)}=\rho,
    \end{aligned}
\]
and all entries, not defined above, are zeros. It is easy to observe
that $\bD$ is a low-triangular matrix and the diagonal entries are all
non-zero, hence $\bD$ is invertible.

The property of generalized Hermite polynomials give
\[
    \xi_d\mHT_{\alpha}=\sum_{j=1}^D\theta_{jd}\mHT_{\alpha+e_j}+u_d\mHT_{\alpha}
    +\alpha_d\mHT_{\alpha-e_d},
\]
which indicates the form of the matrix $\bM_d$.
Since the projection $\mP$ is an orthogonal projection and $p_d(\bv)$ is
a linear polynomial, Theorem \ref{thm:diagonalizable} indicates the
system
\[
    \bPp\bD\bPb^T\pd{\bPp\bw}{t}+
    \sum_{d=1}^D\bPp\bM_d\bPb^T\bPp\bD\bPb^T\pd{\bPp\bw}{x_d}=\bPp\bS.
\]
is globally hyperbolic.

Particularly, the moment system with $M=2,D=3$ is the 10 moment system
with Gaussian closure \cite{Gaussian10}.

\subsection{G13 moment system with hyperbolic regularization}
\label{sec:grad13}
Among all of \Grad moment systems, the G13 moment system drew most
attention of researchers. However, the system suffers a serious
problem with its hyperbolicity. In \cite{Grad13toR13}, it is reported
that the hyperbolicity of it cannot be ensured even around the
Maxwellian. Recently, in \cite{framework}, the authors proposed a
hyperbolic regularization for it. Now we put it in the framework in
detail to help readers to understand the framework. Here we need to
point out that this subsection is similar as Section 4.1.1 in
\cite{framework} since the procedure of the derivative of the moment
system is same.

The kinetic equation and the weight function are the same as those of
HME with $D=3$, and the only difference is the projection. Since in
Section \ref{sec:hme} the moment system with infinite equations has
been derived, based on the idea in Section \ref{sec:framework_view},
we just need to give the projection. The symbols $\bw$, $\bD$ and
$\bM_d$ have the same definition as that in Section \ref{sec:hme}.

For the 13 moment system, only $\rho, u_i, p_{ij},q_i$, $i,j=1,\dots,3$
are taken into account, hence the subspace is
$\bbH^{\weight^{[\bu,\theta]}}_{sub}= \rmspan\left\langle
\weight^{[\bu,\theta]} \left\{1,\xi_i,\xi_i\xi_j,|\bxi|^2\xi_i\right\}
\right\rangle $.  We choose the basis of
$\bbH^{\weight^{[\bu,\theta]}}_{sub}$ as
$\left\{\mbH_{\alpha}\right\}_{|\alpha|\leq2} \bigcup
\left\{\sum_{d=1}^D\mbH_{e_i+2e_d}, i=1,\dots,3\right\}$, then the
matrix $\bPb=(p_{b,ij})_{13\times\infty}$ is
\[
    \begin{aligned}
        p_{b,i,i}=1, i=1,\dots,10,\quad
        &p_{b, 11,\mN(3e_1)}=1,\quad
        p_{b, 11,\mN(e_1+2e_2)}=1,\quad
        p_{b, 11,\mN(e_1+2e_3)}=1,\\
        &p_{b, 12,\mN(3e_2)}=1,\quad
        p_{b, 12,\mN(e_2+2e_1)}=1,\quad
        p_{b, 12,\mN(e_2+2e_3)}=1,\\
        &p_{b, 13,\mN(3e_3)}=1,\quad
        p_{b, 13,\mN(e_3+2e_1)}=1,\quad
        p_{b, 13,\mN(e_3+2e_2)}=1,
    \end{aligned}
\]
where $\mN(\alpha)$ is the same as in the definition for HME, and all
entries, not defined above, are zero.

The orthogonal projection is used for the 13 moment system, so some
calculations based on \eqref{eq:bpp} give the matrix
$\bPp=(p_{p,ij})_{13\times\infty}$ as
\[
    \begin{aligned}
        p_{p,i,i}=1, i=1,\dots,10,\quad
        p_{p,11,\mN(3e_1)}=\frac{3}{5},\quad
        p_{p,11,\mN(e_1+2e_2)}=\frac{1}{5},\quad
        p_{p,11,\mN(e_1+2e_3)}=\frac{1}{5},\\
        p_{p,12,\mN(3e_2)}=\frac{3}{5},\quad
        p_{p,12,\mN(e_2+2e_1)}=\frac{1}{5},\quad
        p_{p,12,\mN(e_2+2e_3)}=\frac{1}{5},\\
        p_{p,13,\mN(3e_3)}=\frac{3}{5},\quad
        p_{p,13,\mN(e_3+2e_1)}=\frac{1}{5},\quad
        p_{p,13,\mN(e_3+2e_2)}=\frac{1}{5},\\
    \end{aligned}
\]
and all entries, not defined above, are zero again.
Easy to check, we have $\bPp\bw=\bw_{13}$, where
$\bw_{13}=(\rho,u_1,u_2,u_3,\theta/2,f_{e_1+e_2},f_{e_1+e_3},f_{2e_2},f_{e_2+e_3},f_{2e_3},
q_1/5,q_2/5,q_3/5)^T$.
Remark \ref{rm:dependence} indicates that $\bw$ is replaced by
$\bPb\bPp\bw=\bPb\bw_{13}$, which yields
\[
    f_{e_i+e_j+e_k}=\frac{1}{5}(\delta_{ij}q_k+\delta_{ik}q_j+\delta_{jk}q_i),
    \quad f_{\alpha}=0, |\alpha|\geq4.
\]

First, the ansatz is 
\[
    \mP f = \sum_{|\alpha|\leq2}f_{\alpha}\mbH_{\alpha}(\bxi) + 
    \frac{1}{5}\sum_{i,j=1}^3q_i\mbH_{e_i+2e_j}(\bxi),
\]
with $f_{e_i}=0, i=1,2,3$ and $\sum_{i=1}^3f_{2e_i}=0$.
Let 
\[
    \sigma_{ij}=\int_{\bbR^3}(\xi_i-u_i)(\xi_j-u_j)f\dd\bxi =
    (1+\delta_{ij})f_{e_i+e_j}.
\]
Then the time and space derivative can be calculated directly as
\[
    \begin{aligned}
        \pd{\mP f}{s} &=
        \pd{\rho}{s}\mbH_0(\bxi) 
        +\sum_{d=1}^3\rho\pd{u_d}{s}\mbH_{e_d}(\bxi)
        +\frac{1}{2}\rho\pd{\theta}{s}\sum_{d=1}^3\mbH_{2e_d}(\bxi)
        +\frac{1}{2}\sum_{i,j=1}^3\pd{\sigma_{ij}}{s}\mbH_{e_i+e_j}(\bxi)\\
        &\quad +\frac{1}{5}\sum_{i,j=1}^3\pd{q_i}{s}\mbH_{e_i+2e_j}(\bxi) 
        +\underline{\underline{\sum_{i,j,d=1}^3\frac{\sigma_{ij}}{2}\pd{u_d}{s}\mbH_{e_i+e_j+e_d}(\bxi)}}
        +\underline{\frac{1}{4}\pd{\theta}{s}\sum_{i,j,d=1}^3\sigma_{ij}\mbH_{e_i+e_j+2e_d}(\bxi)}\\
        &\quad +\underline{\frac{1}{5}\sum_{i,j,d=1}^3q_i\pd{u_d}{s}\mbH_{e_i+2e_j+e_d}(\bxi)}
        +\underline{\frac{1}{10}\pd{\theta}{s}\sum_{i,j,d=1}^3q_i\mbH_{e_i+2e_j+2e_d}(\bxi)}
        \\
        &=\left\langle \bmH,
        \bD\bPb^T\pd{\bPp\bw}{s}\right\rangle_{\infty},
        \qquad s=t,x_k, k=1,2,3.
    \end{aligned}
\]
Projecting $\pd{\mP f}{s}$ into the subspace
$\bbH^{\weight^{[\bu,\theta]}}_{sub}$ is in fact discarding all the
underlined terms and revising the double underlined terms as 
\[
        \mP \sum_{i,j,d=1}^3\frac{\sigma_{ij}}{2}\pd{u_d}{s}\mbH_{e_i+e_j+e_d}(\bxi)
        =\frac{1}{5}\sum_{i,j,d=1}^3
        \sigma_{ij}\pd{u_j}{s}\mbH_{e_i+2e_d}(\bxi).
\]
Till now, we have calculated $\mP\pd{\mP
f}{s}=\left\langle\bPb\bmH,\bPp\bD\bPb^T\pd{\bPp\bw}{s}\right\rangle_{13}$.
For the convection term, Grad directly multiplied $\pd{\mP f}{x_k}$ by
velocity $x_k$ while in our framework we multiplied $\mP\pd{\mP
f}{x_d}$ by velocity $x_k$. Direct calculations give the expression of
$(\xi_k-u_k)\pd{\mP f}{x_k}$ and $(\xi_k-u_k)\mP\pd{\mP f}{x_k}$ as
\[
    \begin{aligned}
        &\rho\pd{u_k}{x_k}\mbH_0(\bxi)
        + \pd{\rho\theta}{x_k}\mbH_{e_k}(\bxi)+\sum_{i=1}^3\pd{\sigma_{ik}}{x_k}\mbH_{e_i}(\bxi)
        +\sum_{d=1}^3\left( \rho\theta\pd{u_d}{x_k}
        +\frac{2}{5}\pd{q_d}{x_k} \right)\mbH_{e_k+e_d}(\bxi)\\
        &+ \sum_{j=1}^3\frac{1}{5}\pd{q_k}{x_k}\mbH_{2e_j}(\bxi)
        + \sum_{d=1}^3\frac{\rho\theta}{2}\pd{\theta}{x_k}\mbH_{2e_d+e_k}(\bxi)
        +\underline{\underline{\sum_{i,j=1}^3\frac{\theta}{2}\pd{\sigma_{ij}}{x_k}\mbH_{e_i+e_j+e_k}(\bxi)}}\\
        &+\left\{  \begin{array}{ll}
            \text{C1:} & 
            \begin{aligned}
                & \sum\limits_{i,j=1}^3\left(
                \sigma_{ki}\pd{u_j}{x_k}+\frac{1}{2}\sigma_{ij}\pd{u_k}{x_k}
                \right)\mbH_{e_i+e_j}(\bxi)
                +\frac{1}{2}\pd{\theta}{x_k}\sum\limits_{i,j=1}^3(\sigma_{kj}\mbH_{e_j+2e_i}+\underline{\sigma_{ij}\mbH_{e_i+e_j+e_k}})\\
                &\quad + \frac{1}{5}\sum_{i,j=1}^3\left(\left(
                q_k\pd{u_i}{x_k}+q_i\pd{u_k}{x_k}
                \right)\mbH_{e_i+2e_j}(\bxi) +
                \underline{2q_i\pd{u_j}{x_k}\mbH_{e_i+e_j+e_d}(\bxi)}\right)
                +h.o.t.
            \end{aligned}\\
            \text{C2:} & \sum\limits_{i,j=1}^3\left( 
            \frac{1}{5}\sigma_{kj}\pd{u_j}{x_k}\mbH_{2e_i}(\bxi)+\frac{2}{5}\sigma_{ij}\pd{u_j}{x_k}\mbH_{e_i+e_k}(\bxi)
            \right)+h.o.t.
        \end{array}\right.,
    \end{aligned}
\]
where $h.o.t.$ denotes by the terms with $\mbH_{\alpha}(\bxi)$,
$|\alpha|>3$, and C1 and C2 correspond to $(\xi_k-u_k)\pd{\mP f}{x_k}$
and $(\xi_k-u_k)\mP\pd{\mP f}{x_k}$, respectively.  These calculations
give $(\xi_k-u_i)\pd{\mP f}{x_k} = \left\langle(\xi_k-u_k)\bmH,
\bD\bPb^T\pd{\bPp\bw}{x_k}\right\rangle_{\infty} =\left\langle\bmH,
(\bM_k-u_k\identity)\bD\bPb^T\pd{\bPp\bw}{x_k}\right\rangle_{\infty}$ and
$(\xi_k-u_i)\mP\pd{\mP f}{x_k} = \left\langle(\xi_k-u_k)\bmH,
\bPb^T\bPp\bD\bPb^T\pd{\bPp\bw}{x_k}\right\rangle_{\infty} =\left\langle\bmH,
(\bM_k-u_k\identity)\bPb^T\bPp\bD\bPb^T\pd{\bPp\bw}{x_k}\right\rangle_{\infty}$.
Projecting $(\xi_k-u_k)\pd{\mP f}{x_k}$ and $(\xi_k-u_k)\mP\pd{\mP
f}{x_k}$ into the subspace $\bbH^{\weight^{[\bu,\theta]}}_{sub}$ is in
fact discarding $h.o.t.$ terms and revising the double
underlined terms as 
\[
    \mP \sum_{i,j=1}^3\frac{\theta}{2}\pd{\sigma_{ij}}{x_k}\mbH_{e_i+e_j+e_k}(\bxi)
    =\sum_{i,j=1}^3\frac{\theta}{5}\pd{\sigma_{ik}}{x_k}\mbH_{e_i+2e_j}(\bxi),
\]
and revising the underlined terms as
\[
    \begin{aligned}
        \mP \frac{1}{2}\pd{\theta}{x_k}\sum\limits_{i,j=1}^3\sigma_{ij}\mbH_{e_i+e_j+e_k}
        &=
        \frac{1}{5}\sum_{i,j=1}^3\sigma_{ki}\pd{\theta}{x_k}\mbH_{e_i+2e_j}(\bxi),\\
        \mP \sum_{i,j=1}^3 \frac{2}{5}q_i\pd{u_j}{x_k}\mbH_{e_i+e_j+e_d}(\bxi)
        &=\frac{2}{25}\sum_{i,j=3}^5\left(
        (q_i\pd{u_i}{x_k}+q_i\pd{u_k}{x_k})\mbH_{e_i+2e_j}(\bxi)
        +q_k\pd{u_i}{x_k}\mbH_{e_k+2e_j}(\bxi) \right).
    \end{aligned}
\]
Till now, we finish the convection term and obtain 
$\mP(\xi_k-u_i)\pd{\mP f}{x_k}$ $=$ $ \left\langle\bPb\bmH,
\bPp(\bM_k-u_k\identity)\bD\bPb^T\pd{\bPp\bw}{x_k}\right\rangle_{13}$ and
$\mP(\xi_k-u_i)\mP\pd{\mP f}{x_k} = \left\langle\bPb\bmH,
\bPp(\bM_k-u_k\identity)\bPb^T\bPp\bD\bPb^T\pd{\bPp\bw}{x_k}\right\rangle_{13}$.

Then matching the coefficients of $\bPb\mbH$, we obtain the well-known
\Grad 13 moment system(G13) and hyperbolic regularized 13 moment
system(HR13):
\begin{equation}
    \begin{aligned}
        \od{\rho}{t}&+\sum_{d,k=1}^3\rho\pd{u_k}{x_k}=0,\\
        \rho\od{u_i}{t}&+\pd{\rho\theta}{x_i}+\sum_{k=1}^3\pd{\sigma_{ki}}{x_k}=0, \\
        \frac{3\rho}{2}\od{\theta}{t}&+\sum_{k=1}^3\pd{q_k}{x_k}
        +\sum_{k=1}^3\rho\theta\pd{u_k}{x_k}+\sum_{k,d=1}^3\sigma_{kd}\pd{u_d}{x_k}=0,\\
        \od{\sigma_{ij}}{t}&+2\rho\theta\pd{u_{\langle
        i}}{x_{j\rangle}}+\frac{4}{5}\pd{q_{\langle i}}{x_{j\rangle}} 
        +\left\{ \begin{array}{ll}
            \text{G13:} & \sum_{k=1}^3\left(2\sigma_{k\langle
            i}\pd{u_{j\rangle}}{x_k}+\sigma_{ij}\pd{u_k}{x_k}\right)\\
            \text{HR13:} & \frac{4}{5}\sigma_{k\langle
            i}\pd{u_k}{x_{j\rangle}}
        \end{array} \right. 
        =S(\sigma_{ij}),\\
        \od{q_i}{t} &+\sum_{j=1}^3\sigma_{ij}\pd{u_j}{t}
        +\frac{5\rho\theta}{2}\pd{\theta}{x_i}+\sum_{k=1}^3\theta\pd{\sigma_{ik}}{x_k}\\
        &+\left\{ \begin{array}{ll}
            \text{G13:} &
            \sum_{k=1}^3\left(\frac{7}{2}\sigma_{ki}\pd{\theta}{x_k}
            +\frac{7}{5}q_i\pd{u_k}{x_k}+\frac{7}{5}q_k\pd{u_i}{x_k}+\frac{2}{5}q_k\pd{u_k}{x_i}\right)
            \\
            \text{HR13:} & 0
        \end{array} \right.
        =S(q_i),
    \end{aligned}
\end{equation}
where $\od{\cdot}{t}=\pd{\cdot}{t}+\sum_{k=1}^3u_k\pd{\cdot}{x_k}$ is
the material derivative, and in the governing equation of
$\sigma_{ij}$, the trace-free tensor symbol is used, which is defined
as for a tensor $t_{ij}$, $t_{\langle
ij\rangle}=\frac{1}{2}(t_{ij}+t_{ji})-\sum_{k=1}^3\frac{1}{3}t_{kk}$.

The upper systems can be easily written in the form as 
\[
    \begin{aligned}
        \text{G13:} \quad&\bPp\bD\bPb^T\od{\bw_{13}}{t}+\sum_{d=1}^3\bPp(\bM_d-u_k\identity)\bD\bPb^T\pd{\bw_{13}}{x_d}
    =\bPp\bS,\\
        \text{HR13:} \quad&\bPp\bD\bPb^T\od{\bw_{13}}{t}+\sum_{d=1}^3\bPp(\bM_d-u_k\identity)\bPb^T\bPp\bD\bPb^T\pd{\bw_{13}}{x_d}
    =\bPp\bS,
    \end{aligned}
\]
and HR13 is exactly the regularized 13 moment system in \cite{framework} and is globally
hyperbolic.

\subsection{Maximum entropy closure}
\label{sec:Levermore}
\paperauthor{Levermore} investigated the maximum entropy principle and
provided a moment closure hierarchy for the Boltzmann equation in
\cite{Levermore}. The resulting moment system possesses an entropy and
global hyperbolicity, while it is known for the lack of a simple
analytical expression. Nevertheless, we try to put the moment system
in our framework. For convenience, only the case $D=1$ is studied, but
there is no essential difficulty to extend this to multi-dimensional cases.

For an even and positive integer $M$, Levermore's linear subspace
$\mathbb{M}$ is defined by
$\mathbb{M}=\rmspan\left\langle 1,\xi,\dots,\xi^M\right\rangle $. Based on the maximum
entropy principle, the distribution function is assumed to have the following
form
\[
    \mathcal{M}(\boldsymbol{g})=\exp\left(
    \boldsymbol{g}^T\boldsymbol{\psi}\right),
\]
where $\boldsymbol{g}\in\bbR^{M+1}$ is a vector of some macroscopic parameters
and $\boldsymbol{\psi}=(1,\xi,\dots,\xi^M)^T$. Choose the weight
functions as $\weight^{[\boldsymbol{g}]}=\mathcal{M}(\boldsymbol{g})$,
then using the Schmidt orthogonalization, we can obtain an
orthogonal basis $\phi_i$ of $\bbH^{\weight^{[\boldsymbol{g}]}}$
satisfying $\phi_i/\weight^{[\boldsymbol{g}]}$ is a monic polynomial
with degree $i$, i.e. there exist constants $c_{m,i}(\boldsymbol{g}),
i=0,\dots,m-1$ such that
\[
    \phi_i = \weight^{[\boldsymbol{g}]}
    \left( \xi^m+\sum_{i=0}^{m-1}c_{m,i}\xi^i \right).
\]
If we let $c_{i,i}=1$ and $c_{i,j}=0, j>i$, $i,j=1,\dots,M+1$, then
$\boldsymbol{\varphi} =
\weight^{[\boldsymbol{g}]}\boldsymbol{\mathrm{C}}\boldsymbol{\psi}$, where
$\boldsymbol{\varphi}=(\phi_0,\dots,\phi_M)^T$ and
$\boldsymbol{\mathrm{C}} = (c_{i-1,j-1})_{M+1\times M+1}$. Since
$\boldsymbol{\mathrm{C}}$ is a lower triangular matrix and its diagonal
entries are all zero, $\boldsymbol{\mathrm{C}}$ is invertible.

Set the subspace as $\bbH_{sub}^{\weight^{[\boldsymbol{g}]}}=
\{\weight^{[\boldsymbol{g}]}h|h\in\mathbb{M}\}$ and $\phi_i,
i=0,\dots,M$, as the basis. Furthermore, an orthogonal projection is used, thus
$\bPp=\bPb=\boldsymbol{\mathrm{T}}$.
Since Levermore assumed the distribution function had the form
$\mathcal{M}(\boldsymbol{g})$, we have $\mP f=\mathcal{M}(\boldsymbol{g})$.
So the constraints are
\[
    f_0=1, \quad f_i=0, i=1,\dots,M,
\]
and $\bw$ is set to $\bw(g_0,\dots,g_M,f_{M+1},\dots)^T$. We write $\bw_{M+1}=\bPp\bw=\boldsymbol{g}$.

Now we begin to derive the moment system. The time and space
derivative turns out to be
\[
    \mL_s(\mP f, \bdeta_1,\bxi) =
    \left\langle
        \weight^{[\boldsymbol{g}]}\boldsymbol{\psi},\pd{\boldsymbol{g}}{s}
    \right\rangle_{N}
    =\left\langle \boldsymbol{\varphi},
    \boldsymbol{\mathrm{C}}^{-T}\pd{\boldsymbol{g}}{s}\right\rangle_{N}
    =\left\langle \bPhi, \tilde{\bD}\pd{\bw_{M+1}}{s}\right\rangle_{N} ,
\]
where $\tilde{\bD}=\bD\bPb^T=\bPb^T\boldsymbol{\mathrm{C}}^{-T}$.

Since $\phi_i/\weight^{[\boldsymbol{g}]}$ is an orthogonal
polynomial, there exist three-term recurrence relations, i.e.
there exist $r_{i,j}(\boldsymbol{g}), j=i-1,i,i+1$ such that
\[
    r_{i,i+1}\phi_{i+1}=(\xi-r_{i,i})\phi_i-r_{i,i-1}\phi_{i-1}.
\]
Denote $\bM^T=(m_{ij})$ by $m_{i+1,j+1}=r_{i,j},j=i-1,i,i+1,
i=0,1,\dots$ and $m_{i+1,j+1}=0, j\neq i-1,i,i+1, i=0,1,\dots$,
then
\[
    \bPp\tilde{\bD}\pd{\bw_{M+1}}{t}+
    \bPp\bM\tilde{\bD}\pd{\bw_{M+1}}{x}=\bPp\bS
\]
is Levermore's moment system. Since $\bPp\bM\tilde{\bD}
= \bPp\bM\bPb^T\boldsymbol{\mathrm{C}}^{-T}
= \bPp\bM\bPb^T\bPp\bPb^T\boldsymbol{\mathrm{C}}^{-T}$, the moment system
\[
    \bPp\tilde{\bD}\pd{\bw_{M+1}}{t}+
    \bPp\bM\bPb^T\bPp\tilde{\bD}\pd{\bw_{M+1}}{x}=\bPp\bS,
\]
derived by our framework, is also Levermore's moment system.

\subsection{Quadrature-based moment equations}
\label{sec:QBME}
Different from HME, a new globally hyperbolic regularization for \Grad moment system was proposed by \textsc{Koellermeier} et al.
in \cite{Koellermeier} and \cite{KoellermeierMSc2013}, recently. The
underlying idea of their quadrature-based moment equations (QBME) is
the substitution of the projection method from analytical integration
to quadrature formulas. With the help of a new framework in
\cite{Koellermeier}, it was shown that the emerging system of
equations is in fact hyperbolic and the eigenvalues also correspond to
the Hermite roots. Now we would like to give a concise deduction of
the one-dimensional quadrature-based moment equations in terms of the
proposed framework of this paper.

For QBME, the 1D Boltzmann equation is considered and the kinetic
equation reads
\begin{equation*}
    \begin{aligned}
        &\bdeta_1=(u,\theta),\quad
        v(\xi)=\frac{\xi-u}{\sqrt{\theta}},\quad
        p(v)=u+\sqrt{\theta}v,\quad\\
        &\mL\left( \pd{}{s};f,\bdeta_1,v\right)=\pd{f}{s}-\pd{f}{v}
        \left( \frac{1}{\sqrt{\theta}}\pd{u}{s}
        +\frac{1}{2\theta}v\pd{\theta}{s}
        \right),\; s=t, x,
    \end{aligned}
\end{equation*}
where $f=f(t,x,v)$. The weight function and the orthogonal weighted
polynomials are defined by
\[
    \weight(v)=\frac{1}{\sqrt{2\pi}}\exp\left( -\frac{v^2}{2}
    \right),\quad
    \mh_{k}(v)=(-1)^{k}\od{^k\weight}{v^k},\quad k\in\bbN,
\]
where $\mh_k(v)/\weight(v)$ is the classical Hermite polynomials.
The orthogonal weighted polynomials satisfy the following properties:
\begin{itemize}
    \item Differential relation: $\od{\mh_k(v)}{v}=-\mh_{k+1}(v)$;
    \item Recurrence relation: $\mh_{k+1}(v)=v\mh_k(v)-k\mh_{k-1}(v)$.
\end{itemize}
For convenience, we define the matrix $\bD_v=(d_{ij})$ with
$d_{ij}=\delta_{i,j+1}$ and $\bM_v=(m_{ij})$ with $m_{i,i+1}=i$,
$m_{i+1,i}=1$  and all others entries set to zeros. Then we have
$\od{\boldsymbol{\mh}}{s}=-\bD_v^T\boldsymbol{\mh}$ and
$v\boldsymbol{\mh}=\bM_v^T\boldsymbol{\mh}$, where $\boldsymbol{\mh}=
(\mh_0,\dots,\mh_n,\dots)^T$. Since $\bdeta=\{u,\theta\}$, some
calculations yield the constraints
\[
    f_1 = f_2 = 0.
\]
We choose $\bw$ as $(f_0, u, \theta,f_3,\dots,f_k,\dots)^T$. Choose a
positive integer $M\geq3$, the subspace is then defined as
$\bbH_{sub}^{\weight}
=\rmspan\left\langle \left\{\mh_{k}\right\}_{k\leq
M}\right\rangle$. The projection operator is chosen as the orthogonal
projection, i.e. $\bPb=\bPp=\boldsymbol{\mathrm{T}}$.

For the time and space derivative, we have
\[
    \begin{aligned}
        \mL\left( \pd{}{s}; f,\bdeta_1,v\right)
        &= \left\langle \boldsymbol{\mh}, \pd{\bdf}{s}\right\rangle_{\infty}
        -\left\langle \od{\boldsymbol{\mh}}{v}, \bdf \left(
        \frac{1}{\sqrt{\theta}}\pd{u}{s} +\frac{1}{2\theta}v\pd{\theta}{s}
        \right)\right\rangle_{\infty} \\
        &= \left\langle \boldsymbol{\mh}, \pd{\bdf}{s}\right\rangle_{\infty}
        + \left\langle \boldsymbol{\mh}, \bD_v\bdf
        \frac{1}{\sqrt{\theta}}\pd{u}{s} \right\rangle_{\infty}
        +\left\langle \boldsymbol{\mh},\bM_v\bD_v\bdf \frac{1}{2\theta}\pd{\theta}{s}
        \right\rangle_{\infty}.
    \end{aligned}
\]
For the derivative term, there are two matrices in the last term of the
upper equation. The internal projection strategy $PS_1$ is
\[
    \begin{aligned}
        \mL^{PS_1}\left( \pd{}{s}; f,\bdeta_1,v\right) =
        \left\langle \boldsymbol{\mh}, \bPb^T\pd{\bPp\bdf}{s}\right\rangle_{\infty}
        &+\left\langle \boldsymbol{\mh}, \bD_v\bPb^T\bPp\bdf
        \frac{1}{\sqrt{\theta}}\pd{u}{s} \right\rangle_{\infty} \\
        &+\left\langle \boldsymbol{\mh},\bM_v\bPb^T\bPp\bD_v\bPb^T\bPp\bdf \frac{1}{2\theta}\pd{\theta}{s}
        \right\rangle_{\infty}.
    \end{aligned}
\]
Collecting all the coefficients of $\pd{\bw}{s}$, we obtain
$\bPp\bD_{PS_1}\bPb^T=(d_{ps,i,j})_{M+1,M+1}$ satisfying
\[
    \begin{aligned}
        d_{ps,i,i}&=1,\;~ i=1,4,5,\dots,M+1,\quad
        d_{ps,i,2} = \frac{f_{i-2}}{\sqrt{\theta}},\;~
        i=1,\dots,M+1,\\
        d_{ps,i,3}&=\frac{1}{2\theta}(f_{i-3}+i f_{i-1}),\;~
        i=1,\dots,M, \quad
        d_{ps,i,M+1}=\frac{f_{i-3}}{2\theta}.
    \end{aligned}
\]
It is easy to verify the invertibility of $\bPp\bD_{PS_1}\bPb^T$.
Since $p(v)=u+\sqrt{\theta}v$, the internal projection strategy $PS_2$ vanishes
and $\bM=u\identity +\sqrt{\theta}\bM_v$.  Since the projection $\mP$
is an orthogonal projection and $p_d(\bv)$ is a linear polynomial,
Theorem \ref{thm:diagonalizable} indicates the resulting system
\[
    \bPp\bD_{PS_1}\bPp^T\pd{\bw}{t}+\bPp\bM\bPb^T\bPp\bD_{PS_1}\bPb^T\bPp\pd{\bw}{x}=\bPp\bS
\]
is globally hyperbolic.

The derivation shows that even the QBME with substitution of
exact integration by a suitable quadrature rule can be interpreted as
a certain projection method, where the internal projection strategy $PS_1$ is
particularly important. In fact, the additional projection in $PS_1$
reflects the additional cut-off of higher order terms that is done by
quadrature-based methods automatically during the calculation, see
e.g. the hyperbolicity proof in \cite{Koellermeier}.

Here we point out that if the internal projection strategy $PS_1$ is chosen as
\[
    \begin{aligned}
        \mL^{PS_1}\left( \pd{}{s}; f,\bdeta_1,v\right) =
        \left\langle \boldsymbol{\mh}, \bPb^T\pd{\bPp\bdf}{s}\right\rangle_{\infty}
        &+\left\langle \boldsymbol{\mh}, \bD_v\bPb^T\bPp\bdf
        \frac{1}{\sqrt{\theta}}\pd{u}{s} \right\rangle_{\infty} \\
        &+\left\langle \boldsymbol{\mh},\bM_v\bD_v\bPb^T\bPp\bdf \frac{1}{2\theta}\pd{\theta}{s}
        \right\rangle_{\infty},
    \end{aligned}
\]
i.e. without the additional projection in the last term, the resulting
moment system is the same as the HME moment system \eqref{eq:hme1d} in
Section \ref{sec:hme1d}.

\subsection{Model reduction with alternative projection operators}
Apart from the choice of the equation, the basis functions and the
internal projection strategy, there is also the possibility to use different
projection operators to derive existing and new moment systems.

In the framework proposed in Section \ref{sec:framework} there are
three projections, i.e. projection of the distribution function, the
time and space derivative and the term after multiplying with velocity
into the subspace $\bbH_{sub}^{\weight}$. Different projections
correspond to different steps during the computation, thus it can be
reasonable to use different projection operators in the framework.
Then the resulting moment system can be written as
\begin{equation}\label{eq:momentsystem_frame_new}
    \begin{aligned}
        \bPp^{(2)}\bD_{PS_1}\bPb^T\pd{\bPp^{(1)}\bw}{t}
        +\sum_{d=1}^D
        \bPp^{(3)}\bM_{d,PS_2}\bPb^T
        \bPp^{(2)}\bD_{PS_1}\bPb^T\pd{\bPp^{(1)}\bw}{x_d}
        =\bPp\bS,
    \end{aligned}
\end{equation}
where $\bPp^{(k)}$, $k=1,2,3$ correspond to three projections and
$\bPp$ is some projection for the right side hand, which is not
concerned in this paper. In calculating $\bD_{PS_1}$ and
$\bM_{d,PS_2}$, the projections $\bPp^{(2)}$ and $\bPp^{(3)}$ are
used. The procedure of the framework requires $\bPp^{(1)}$ to be
commutative with time and space derivative, and the conditions in
Theorem \ref{thm:diagonalizable} restrict $\bPp^{(3)}$ to an
orthogonal projection. Based on this idea, it is possible to derive a
different type of moment system for the same inputs of the framework
except for the projection. However, we remark that for the standard
projection, we have $\mP^2=\mP$, but for different projections $\mP_1$
and $\mP_2$, $\mP_2\mP_1$ is usually not equal to $\mP_2$. The
viewpoint in Section \ref{sec:framework_view} may fail to work. In the
procedure of the framework, more attention should be paid on the
derivation.

As an example for the derivation of an existing system, we consider
the 1D QBME, described in Section \ref{sec:QBME}.

If we choose $\bPp^{(1)} = \boldsymbol{\mathrm{T}}$, then compute the
time and space derivative, we get
\[
    \mL^{PS_1}\left( \pd{}{s};\mP f,\emptyset, \xi \right)
    =\left\langle \bPhi,\bD^d\bPb^T\pd{\bPp\bw}{s}\right\rangle_{\infty},
\]
where $\bD^d\bPb^T$ is $\bD\bPb^T$ with $f_k=0$ for $k>M$, and $\mL$,
$\mP$, $f$, $\bw$, $\bPhi$ and $\bD$ have the same definition as that
in \ref{sec:hme}.  We choose the second projection $\bPp^{(2)}$ as
\[
    \bPp^{(2)} = \boldsymbol{\mathrm{T}} - \frac{M+1}{\theta}
    \boldsymbol{\mathrm{E}}_{M+1,M+3},
\]
where $\boldsymbol{\mathrm{E}}_{i,j}$ is a matrix with only the
$i,j$-entry is one and the others are all zero, and the size of it
depends on the context.
Choosing the third projection $\bPp^{(3)}=\boldsymbol{\mathrm{T}}$,
the resulting moment equations
\[
    \bPp^{(2)}\bD^d\bPb^T\pd{\bPp^{(1)}\bw}{t} +
    \bPp^{(3)}\bM\bPb^T\bPp^{(2)}\bD^d\bPb^T\pd{\bPp^{(1)}\bw}{x}=\bPp\bS.
\]
are the quadrature-based moment equations \cite{Koellermeier}, and the
same as those in Section \ref{sec:QBME}.

From the point of view of using different projections in the
framework, we can see the difference between HME and QBME for 1D,
which is only the use of a different projection operator. It shows,
that the methods are in fact closely related and belong to the same
type of projection method. The same procedure is unfortunately not
possible in the multi-dimensional case, as the basis functions do not
match.

This treatment also offers some flexibility for the hyperbolicity. The
eigenvalues of the coefficient matrix of the system all depend on the
matrix $\bPp^{(3)}\bM_{d,PS_2}\bPb^T$, and the only constraint on the
matrix $\bPp^{(2)}\bD_{PS_1}\bPb^T$ is the invertibility. Hence, it is
possible to derive other hyperbolic systems if wanted.


\section{New Models}
\label{sec:nhmsf}
In the last section, several conventional hyperbolic moment systems were
studied in the framework. As a powerful tool, the framework is not only
able to include existing models, but is also able to derive new
models. Based on the framework, we will derive some new hyperbolic
moment systems in this section.

\subsection{Regularization of Grad's ordered moment hierarchy}
For the conventional Boltzmann equation \eqref{eq:boltzmann} with a
Maxwellian as the weight function, there are two possible choices of the
subspace $\bbH_{sub, M}^{\weight^{[\bu,\theta]}}$, where
$\weight^{[\bu,\theta]}$ is the same as that in HME, and $M$ is a
positive integer. One choice is
\begin{equation}
    \bbH_{sub,M}^{\weight^{[\bu,\theta]}}=\rmspan\left\langle
    \weight^{[\bu,\theta]}\left\{\bxi^\alpha\right\}_{|\alpha|\leq M}\right\rangle
    =\rmspan\left\langle \{\mbH_{\alpha}\}_{|\alpha|\leq M} \right\rangle ,
\end{equation}
corresponding to $10$, $20$, $35$, $56$, $84$, $\dots$ moments or
moment systems $G10$, $G20$, $G35$, $G56$, $G84$, $\dots$ for
$D=3$. The moment systems in \cite{NRxx} and HME correspond to this
choice. This set of moments is sometimes called a \emph{full moment
  theory}, see e.g.  \cite{Torrilhon2014}, because it includes the
full set of moments up to order $M$.

The other choice is
\begin{equation}
    \begin{aligned}
    \bbH_{sub,M}^{\weight^{[\bu,\theta]}}&=\rmspan\left\langle
    \weight^{[\bu,\theta]}\{\bxi^\alpha\}_{|\alpha|\leq
    M-1}\bigcup
    \weight^{[\bu,\theta]}\{|\bxi|^2\bxi^{\alpha}\}_{|\alpha|=M-2}
    \right\rangle \\
    &=\rmspan\left\langle \left\{\mbH_{\alpha}\right\}_{|\alpha|\leq M-1}\bigcup
    \left\{ \sum_{d=1}^D\mbH_{\alpha+2e_d}\right\}_{|\alpha|=M-2}\right\rangle ,
    \end{aligned}
\end{equation}
corresponding to $5, 13, 26, 45, \dots$ moments or moment systems $G5,
G13, G26, G45, \dots$ for $D=3$. The $G13$ moment system in Section
\ref{sec:grad13} belongs to this class. The second set of moments can
be seen as a hierarchy of moment sets that is a kind of \emph{ordered
moment system}, because higher members of the hierarchy always include
fluxes of the lower members, see again \cite{Torrilhon2014} where this
notation is used first. Note that members of the ordered moment
hierarchy also have a rotationally invariant basis.

Full moment theories have been extensively studied and globally
hyperbolic versions for it have also been proposed. But for Grad's
ordered moment theories, there is only very few work, e.g.
\cite{Struchtrup}, and globally hyperbolic regularizations are only
proposed for G13. Here we give a concise derivation of the ordered
moment hierarchy and propose a globally hyperbolic version. Similar as
the definition of the regularized G13 moment system in Section
\ref{sec:grad13}, we only need to choose the projection. Hence, the
symbols $\weight^{[\bu,\theta]}$, $\mH$, $\bw$, $\bM_d$ and $\bD$ have
the same definitions as those in Section \ref{sec:hme}.

First, we define the moments
\[
    \Delta_{\alpha}=\frac{1}{2}\int_{\bbR^D}\frac{1}{\weight^{[\bu,\theta]}}
    f\sum_{d=1}^D\mbH_{\alpha+2e_d}\dd\bxi,\quad |\alpha|=M-2.
\]
Let $\{\mbH_{\alpha}\}_{\alpha\in\bbN^D}$ be the basis of
$\bbH^{\weight^{[\bu,\theta]}}$, and $\{\mbH_{\alpha}\}_{|\alpha|\leq
M-1}\bigcup \{ \sum_{d=1}^D\mbH_{\alpha+2e_d}\}_{|\alpha|=M-2}$
be the basis of $\bbH_{sub}^{\weight^{[\bu,\theta]}}$. Then
$\bPb=(p_{b,i,j})$ is, for $d=1,\cdots,D$,
\[
    p_{b,i,i}=1, \quad i=1,\dots,\mN( (M-1)e_D),\quad
    p_{b,\mN(\alpha+2e_1),\mN(\alpha+2e_d)}=1,
    \quad |\alpha|=M-2.
\]
Here $\mN( (M-1)e_D)$ is the cardinality of $\{\alpha\}_{|\alpha|\leq
M-1}$, and $\mN(\alpha+2e_1)$ is the consecutive number of
$\sum_{d=1}^D\mH_{\alpha+2e_d}$ in the basis of
$\bbH_{sub}^{\weight^{[\bu,\theta]}}$. The orthogonal projection is
used, thus $\bPp$ can be calculated based on \eqref{eq:bpp} as
\[
    \begin{aligned}
        &p_{p,i,i}=1, \quad i=1,\dots,\mN( (M-1)e_D),\\
        &p_{p,\mN(\alpha+2e_1),\mN(\alpha+2e_d)}=
        \frac{(\alpha+2e_d)!}{\sum_{d=1}^D(\alpha+2e_d)!},\quad
        |\alpha|=M-2, \text{ and } d=1,\dots,D,
    \end{aligned}
\]
where $\alpha!$ stands for $\prod_{d=1}^D\alpha_d!$.
Easy to check, we have $\bPp\bw=\bw_N$, where $N$ is the dimension of
$\bbH_{sub}^{\weight^{[\bu,\theta]}}$, and $\bw_N$ is
\[
    (\bw_N)_i = (\bw)_i, i=1,\dots,\mN( (M-1)e_D),\quad
    (\bw_N)_{\mN(\alpha+2e_1)} =
    \frac{\Delta_{\alpha}}{\sum_{d=1}^D(\alpha+2e_d)!}.
\]
Then
\[
    \bPp\bD\bPb^T\pd{\bw_{N}}{t}+\sum_{d=1}^3\bPp\bM_d\bD\bPb^T\pd{\bw_{N}}{x_d}
    =\bPp\bS,
\]
is Grad's ordered moment system of order $M$ and
\[
    \bPp\bD\bPb^T\pd{\bw_{N}}{t}+\sum_{d=1}^3\bPp\bM_d\bPb^T\bPp\bD\bPb^T\pd{\bw_{N}}{x_d}
    =\bPp\bS,
\]
is the regularized version thereof. Theorem \ref{thm:diagonalizable}
indicates that the moment system is globally hyperbolic.

As stated in Remark \ref{rm:dependence}, the matrixes $\bD$ and
$\bM_d$ and the vector $\bS$ in the upper equation is defined as
$\bD=\bD(\bPb\bPp\bw)$, $\bM=\bM(\bPb\bPp\bw)$,
$\bS=\bS(\bPb\bPp\bw)$, respectively.

Particularly, if $D=3$ and $M=2$, the moment system reduces to the
classical Euler equations, and if $D=3$ and $M=3$, the moment system
is that in Section \ref{sec:grad13}.

\subsection{Quadrature-based moment equations for multi-dimensional case}
\label{sec:QBMEMD}
QBME have been extended to the multi-dimensional case in
\cite{KoellermeierRGD2014}, based on the quadrature-based idea.
However, the tensor product approach for the quadrature points causes
that the resulting system in \cite{KoellermeierRGD2014} is not
rotationally invariant. Note that it is impossible to achieve
rotational invariance in that framework as there is no corresponding
rotational invariant Gaussian quadrature rule in multiple dimensions.

In this subsection, we extend QBME to the multi-dimensional case based
on the framework in Section \ref{sec:framework} to obtain a hierarchy
of globally hyperbolic and rotationally invariant moment systems.

For the $D$-dimensional Boltzmann equation, the kinetic equation is
\[
    \begin{aligned}
        &\bdeta_1=(u_1,\dots,u_D,\theta),\quad
        \bv(\bxi)=\frac{\bxi-\bu}{\sqrt{\theta}},\quad
        p_d(\bv)=u_d+\sqrt{\theta}v_d,\quad\\
        &\mL\left( \pd{}{s};f, \bdeta_1, \bv(\bxi) \right)=\pd{f}{s}-\sum_{k=1}^D\pd{f}{v_k}
        \left( \frac{1}{\sqrt{\theta}}\pd{u_k}{s}
        +\frac{1}{2\theta}v_k\pd{\theta}{s}
        \right),\; s=t, x_d,
    \end{aligned}
\]
where $f=f(t,\bx,\bv)$. The weight function and the orthogonal
weighted polynomials are defined by
\[
    \weight(\bv)=\frac{1}{\sqrt{2\pi}^D}\exp\left( -\frac{|\bv|^2}{2}
    \right),\quad
    \mh_{\alpha}(\bv)=(-1)^{|\alpha|}\od{^\alpha\weight}{\bv^\alpha},
    \quad \alpha\in\bbN^D,
\]
and satisfy the following properties:
\begin{itemize}
    \item Differential relation:
        $\od{\mh_\alpha(\bv)}{v_d}=-\mh_{\alpha+e_d}(\bv)$,
        $d=1,\dots,D$,
    \item Recurrence relation:
        $\mh_{\alpha+e_d}(\bv)=v_d\mh_{\alpha}-\alpha_d\mh_{\alpha-e_d}(\bv)$,
        $d=1,\dots,D$.
\end{itemize}
Similar as in Section \ref{sec:QBME}, we define $\bD_{v,d}$ such that
$\od{\boldsymbol{\mh}}{v_d}=-\bD_{v,d}^T\boldsymbol{\mh}$ and
$\bM_{v,d}$ such that
$v_d\boldsymbol{\mh}=\bM_{v,d}^T\boldsymbol{\mh}$, $d=1,\dots,D$,
where $\boldsymbol{\mh}=(\mh_{\alpha})$ is a vector of elements sorted
by ascending order of $\alpha$. We set $\bdeta=(u_1,\dots,u_D,\theta)$
and some calculations yield the constraints
\[
    f_{e_d}=0, d=1,\dots,D,\quad
    \sum_{d=1}^D{f_{2e_d}}=0.
\]
Hence, we use $u_i$ to replace $f_{e_i}$ and $\theta/2$ to replace
$f_{2e_1}$ in $\bdf$, and name the resulting vector $\bw$, where
$\bdf=(f_{\alpha})$ is a vector of elements sorted by ascending order
of $\alpha$. We choose a positive integer $M\geq3$, and the subspace is
then defined as
$\bbH^{\weight}_{sub}=\rmspan\left\langle\left\{\mh_{\alpha}
\right\}_{|\alpha|\leq M}\right\rangle$. Note that this yields a
rotationally invariant basis, in contrast to the approach of the
existing multi-dimensional QBME method.

The projection operator is chosen as the orthogonal projection, i.e.
$\bPb=\bPp=\boldsymbol{\mathrm{T}}$.

For the time and space derivative, we have, for $s=t,x_d$,
$d=1,\dots,D$,
\[
    \begin{aligned}
        \mL\left( \pd{}{s}; f,\bdeta_1,\bv\right)
        &= \left\langle \boldsymbol{\mh}, \pd{\bdf}{s}\right\rangle_{\infty}
        -\sum_{k=1}^D\left\langle \od{\boldsymbol{\mh}}{v_k}, \bdf \left(
        \frac{1}{\sqrt{\theta}}\pd{u_k}{s} +\frac{1}{2\theta}v_k\pd{\theta}{s}
        \right)\right\rangle_{\infty} \\
        &= \left\langle \boldsymbol{\mh}, \pd{\bdf}{s}\right\rangle_{\infty}
        +\sum_{k=1}^D\left\langle \boldsymbol{\mh}, \bD_{v,k}\bdf
        \frac{1}{\sqrt{\theta}}\pd{u_k}{s} \right\rangle_{\infty}
        +\sum_{k=1}^D\left\langle \boldsymbol{\mh},\bM_{v,k}\bD_{v,k}\bdf \frac{1}{2\theta}\pd{\theta}{s}
        \right\rangle_{\infty}.
    \end{aligned}
\]
Similar as for the 1D case, the internal projection strategy $PS_1$ is
\[
    \begin{aligned}
    \mL^{PS_1}\left( \pd{}{s}; f,\bdeta_1,\bv\right)
        = \left\langle \boldsymbol{\mh}, \bPb^T\bPp\pd{\bdf}{s}\right\rangle_{\infty}
        &+\sum_{k=1}^D\left\langle \boldsymbol{\mh}, \bD_{v,k}\bPb^T\bPp\bdf
        \frac{1}{\sqrt{\theta}}\pd{u_k}{s} \right\rangle_{\infty} \\
        &+\sum_{k=1}^D\left\langle \boldsymbol{\mh},
        \bM_{v,k}\bPb^T\bPp\bD_{v,k}\bPb^T\bPp\bdf
        \frac{1}{2\theta}\pd{\theta}{s}
        \right\rangle_{\infty}.
    \end{aligned}
\]
Collecting all the coefficients of $\pd{\bw}{s}$, we obtain
$\bPp\bD_{PS_1}\bPb^T=(d_{ps,i,j})_{N\times N}$, $N=\mN(Me_D)$,
satisfying
\[
    \begin{aligned}
        &d_{ps,\mN(\alpha),\mN(\alpha)}=1, |\alpha|\neq 1 \text{ and }
        \alpha\neq 2e_1,\quad
        d_{ps, \mN(\alpha),
        \mN(e_k)}=\frac{f_{\alpha-e_k}}{\sqrt{\theta}}, |\alpha|\leq
        M, k=1,\dots,D,\\
        &d_{ps,\mN(\alpha),\mN(2e_1)}=\frac{1}{2\theta}\sum_{k=1}^D(f_{\alpha-2e_k}+(\alpha_k+1)f_{\alpha}),
        |\alpha|\leq M-1,\\
        &d_{ps,\mN(\alpha),\mN(2e_1)}=\frac{1}{2\theta}\sum_{k=1}^Df_{\alpha-2e_k},
        |\alpha|=M,\quad
        d_{ps,\mN(2e_1),\mN(2e_k)}=-1, k=1,\dots,D,
    \end{aligned}
\]
where all entries not defined above are zero. Based on the analysis in
Section \ref{sec:hme} and Section \ref{sec:QBME}, it is easy to verify that
$\bPp\bD_{PS_1}\bPb^T$ is invertible.

Since $p_d(\bv)=u_d+\sqrt{\theta}v_d$, the internal projection strategy $PS_2$
vanishes and $\bM_d=u_d\identity +\sqrt{\theta}\bM_{v,d}$.
Since the projection $\mP$ is an orthogonal projection and $p_d(\bv)$ is
a linear polynomial, Theorem \ref{thm:diagonalizable} indicates the
resulting system
\[
    \bPp\bD_{PS_1}\bPp^T\pd{\bw}{t}+\sum_{d=1}^D\bPp\bM_d\bPb^T\bPp\bD_{PS_1}\bPb^T\bPp\pd{\bw}{x_d}=\bPp\bS
\]
is globally hyperbolic and a rotationally invariant, multi-dimensional extension of QBME.

We emphasize that this extension is only possible with the help of the
operator projection approach. In multiple dimensions, there is no
Gaussian quadrature rule that could result in a rotationally
invariant moment system. However, the use of the projection operator
$\bPb=\bPp=\boldsymbol{\mathrm{T}}$ mimics the effect of a Gaussian
quadrature rule, as it essentially cuts off the highest order term
during every different step of the derivation. We can therefore say that
the derivation of the new system follows the quadrature-based
technique but uses an operator projection to achieve rotational
invariance. 

\section{Conclusion} \label{sec:conclusion}
For first-order convection equations, hyperbolicity is necessary
for the existence of a solution. Historically, the lack of global
hyperbolicity has been a critical defect of \Grad moment method, and
largely limited the development of moment methods. In this paper, we
investigate \Grad moment system and its globally hyperbolic
regularized version for the 1D Boltzmann equation, then point out that the most
essential point of the regularization is to treat the time derivative
and the space derivative in the same manner.

Based on this observation, a
general framework for the construction of hyperbolic moment systems from
kinetic equations using the operator projection method is proposed.
This framework is so concise and clear that it can be treated as an
algorithm, and once the four inputs, i.e. the kinetic
equation, the weight function, the projection operator and the
internal projection strategy, are given, the moment system can be derived with
some routine calculations. Among the four inputs, the weight function is
the most essential one, because it determines the approximation space. The
projection operator determines the type of the moment system.
In this framework, it is possible to contain some information of the
problems to be solved in the moment system by the choice of an appropriate weight
function, and it is also possible to derive moment systems without the
projection first and then to perform the projection at last, which helps to
understand the difference of moment systems with the same weight
function (such as G20 and G13) or even the same basis (such as 1D HME and QBME).

Different existing hyperbolic models, such as hyperbolic regularizations of
\Grad moment method for 1D (Section  \ref{sec:hme1d}) and $n$D (Section
\ref{sec:hme}), anisotropic hyperbolic moment equations (Section
\ref{sec:ahme}), the hyperbolic version of the G13 moment system
(Section \ref{sec:grad13}), Levermore's maximum entropy principle
(Section \ref{sec:Levermore}) and quadrature-based moment equations
(QBME) (Section \ref{sec:QBME}), are included in the framework.
Actually, some other models, such as the $P_N$ and $M_N$ model in
radiative transfer are also included in this framework. Furthermore,
based on the framework, we propose a hyperbolic regularization of the
ordered moment hierarchy (such as 13, 26, 45 moment systems), and
extend QBME to the multi-dimensional case with the resulting moment
system being rotational invariant.

The aforementioned examples and applications thus show the benefit of
the new operator projection approach and open many new possibilities
for research on moment methods.

\section*{Acknowledgements}

\bibliographystyle{plain}
\bibliography{article}

\begin{thebibliography}{10}

\bibitem{Fan}
Z.~Cai, Y.~Fan, and R.~Li.
\newblock Globally hyperbolic regularization of {G}rad's moment system in one
  dimensional space.
\newblock {\em Comm. Math. Sci.}, 11(2):547--571, 2013.

\bibitem{framework}
Z.~Cai, Y.~Fan, and R.~Li.
\newblock A framework on moment model reduction for kinetic equation.
\newblock {\em arXiv:1402.0653}, 2014.

\bibitem{Fan_new}
Z.~Cai, Y.~Fan, and R.~Li.
\newblock Globally hyperbolic regularization of {G}rad's moment system.
\newblock {\em Comm. Pure Appl. Math.}, 67(3):464--518, 2014.

\bibitem{NRxx}
Z.~Cai and R.~Li.
\newblock Numerical regularized moment method of arbitrary order for
  {B}oltzmann-{BGK} equation.
\newblock {\em SIAM J. Sci. Comput.}, 32(5):2875--2907, 2010.

\bibitem{NRxx_new}
Z.~Cai, R.~Li, and Y.~Wang.
\newblock Numerical regularized moment method for high {M}ach number flow.
\newblock {\em Commun. Comput. Phys.}, 11(5):1415--1438, 2012.

\bibitem{Grad13toR13}
Z.-N. Cai, Y.-W. Fan, and R.~Li.
\newblock On hyperbolicity of 13-moment system.
\newblock {\em Kinetic and Related Models}, 7(3):415--432, 2014.

\bibitem{ANRxx}
Y.-W. Fan and R.~Li.
\newblock Globally hyperbolic moment system by generalized {H}ermite expansion.
\newblock {\em arXiv:1401.4639}, 2014.

\bibitem{Grad}
H.~Grad.
\newblock On the kinetic theory of rarefied gases.
\newblock {\em Comm. Pure Appl. Math.}, 2(4):331--407, 1949.

\bibitem{Gu}
X.~J. Gu and D.~R. Emerson.
\newblock A computational strategy for the regularized 13 moment equations with
  enhanced wall-boundary equations.
\newblock {\em J. Comput. Phys.}, 255(1):263--283, 2007.

\bibitem{KaufTorrilhon}
P.~Kauf, M.~Torrilhon, and M.~Junk.
\newblock Scale-induced closure for approximations of kinetic equations.
\newblock {\em J. Stat. Phys.}, 141:848–888, 2010.

\bibitem{KoellermeierMSc2013}
J.~Koellermeier.
\newblock Hyperbolic approximation of kinetic equations using quadrature-based
  projection methods.
\newblock Master's thesis, RWTH Aachen University, 2013.

\bibitem{Koellermeier}
J.~Koellermeier, R.~Schaerer, and M.~Torrilhon.
\newblock A framework for hyperbolic approximation of kinetic equations using
  quadrature-based projection methods.
\newblock {\em Kinet. Relat. Mod.}, 7(3):531--549, 2014.

\bibitem{KoellermeierRGD2014}
J.~Koellermeier and M.~Torrilhon.
\newblock Hyperbolic moment equations using quadrature-based projection
  methods.
\newblock In {\em Proceedings of the 29th International Symposium on Rarefied
  Gas Dynamics. AIP Conf. Proc. 1628}, pages 626--633, 2014.

\bibitem{Levermore}
C.~D. Levermore.
\newblock Moment closure hierarchies for kinetic theories.
\newblock {\em J. Stat. Phys.}, 83(5--6):1021--1065, 1996.

\bibitem{Gaussian10}
C~David Levermore and William~J Morokoff.
\newblock The gaussian moment closure for gas dynamics.
\newblock {\em SIAM Journal on Applied Mathematics}, 59(1):72--96, 1998.

\bibitem{McDonald}
J.~McDonald and M.~Torrilhon.
\newblock Affordable robust moment closures for {CFD} based on the
  maximum-entropy hierarchy.
\newblock {\em J. Comput. Phys.}, 251:500--523, 2013.

\bibitem{Muller}
I.~M{\"u}ller and T.~Ruggeri.
\newblock {\em Rational Extended Thermodynamics, Second Edition}, volume~37 of
  {\em Springer tracts in natural philosophy}.
\newblock Springer-Verlag, New York, 1998.

\bibitem{Struchtrup2002}
H.~Struchtrup.
\newblock {G}rad's moment equations for microscale flows.
\newblock In A.~D. Ketsdever and E.~P. Muntz, editors, {\em Rarefied Gas
  Dynamics: 23rd International Symposium}, volume 663, pages 792--799. AIP,
  2003.

\bibitem{Struchtrup}
H.~Struchtrup.
\newblock {\em Macroscopic Transport Equations for Rarefied Gas Flows:
  Approximation Methods in Kinetic Theory}.
\newblock Springer, 2005.

\bibitem{Struchtrup2003}
H.~Struchtrup and M.~Torrilhon.
\newblock Regularization of {G}rad's 13 moment equations: Derivation and linear
  analysis.
\newblock {\em Phys. Fluids}, 15(9):2668--2680, 2003.

\bibitem{Torrilhon2014}
M.~Torrilhon.
\newblock Convergence study of moment approximations for boundary value
  problems of the {B}oltzmann-{BGK} equation.
\newblock {\em submitted}.

\bibitem{TorrilhonEditorial}
M.~Torrilhon.
\newblock Special issues on moment methods in kinetic gas theory.
\newblock {\em Continuum Mech. Thermodyn.}, 21(5):341--343, 2009.

\bibitem{Torrilhon2010}
M.~Torrilhon.
\newblock Hyperbolic moment equations in kinetic gas theory based on
  multi-variate {P}earson-{IV}-distributions.
\newblock {\em Commun. Comput. Phys.}, 7(4):639--673, 2010.

\end{thebibliography}
\end{document}